\newtheorem{theorem}{Theorem}
\newtheorem{lemma}{Lemma}
\begin{document}
	
	\title{A Learning-Based Two-Stage Spectrum Sharing Strategy with Multiple Primary Transmit Power Levels}
	\author{\IEEEauthorblockN{Rui Zhang, Peng Cheng, Zhuo Chen, Yonghui Li, and Branka Vucetic}
	\thanks{R.~Zhang, P.~Cheng, Y.~Li, and B.~Vucetic are with the School of Electrical and Information Engineering, the University of Sydney, Australia, (e-mail: rui.zhang1@sydney.edu.au; peng.cheng@sydney.edu.au; yonghui.li@sydney.edu.au; branka.vucetic@sydney.edu.au). Z.~Chen is with CSIRO DATA61, Australia (e-mail: zhuo.chen@ieee.org).}}
	
\markboth{Accepted by IEEE Transactions on Signal Processing}%
{Shell \MakeLowercase{\textit{et al.}}: Bare Demo of IEEEtran.cls for Computer Society Journals}
	
	\maketitle
	\begin{abstract}
Multi-parameter cognition in a cognitive radio network (CRN) provides a more thorough understanding of the radio environments, and could potentially lead to far more intelligent and efficient spectrum usage for a secondary user. In this paper, we investigate the multi-parameter cognition problem for a CRN where the primary transmitter (PT) radiates multiple transmit power levels, and propose a learning-based two-stage spectrum sharing strategy. We first propose a data-driven/machine learning based multi-level spectrum sensing scheme, including the spectrum learning (Stage I) and prediction (the first part in Stage II). This fully blind sensing scheme does not require any prior knowledge of the PT power characteristics. Then, based on a novel normalized power level alignment metric, we propose two prediction-transmission structures, namely periodic and non-periodic, for spectrum access (the second part in Stage II), which enable the secondary transmitter (ST) to closely follow the PT power level variation. The periodic structure features a fixed prediction interval, while the non-periodic one dynamically determines the interval with a proposed reinforcement learning algorithm to further improve the alignment metric. Finally, we extend the prediction-transmission structure to an online scenario, where the number of PT power levels might change as a consequence of PT adapting to the environment fluctuation or quality of service variation. The simulation results demonstrate the effectiveness of the proposed strategy in various scenarios.
	\end{abstract}
	
	\begin{IEEEkeywords}
		Cognitive radio, multiple primary transmit power levels, machine learning.
	\end{IEEEkeywords}
	
	\section{Introduction}
	\subsection{Cognitive Radio}
	The emerging new wireless technologies, such as 5G cellular networks and machine-to-machine enabled industrial Internet of Things, are fueling an ever-increasing demand for access to the radio frequency spectrum. Cognitive radio (CR), an intelligent wireless technology able to recognize the surrounding radio environments \cite{1391031}, creates a potential communication paradigm to achieve more efficient and flexible spectrum usage. A secondary user (SU) with CR capability monitors the spectrum utilization of a primary user (PU) and determines its access to such spectrum accordingly. Two fundamental challenges arise in the process: how to explore possible spectrum opportunities (spectrum sensing) and how to exploit such opportunities efficiently (spectrum access).
	
	Spectrum sensing measures and percepts the surrounding radio spectrum state based on various signal processing methods, including matched filter detection \cite{1399240,5175440}, cyclostationary detection \cite{4413137,4567443}, and energy detection \cite{4796930,5703204,6178840,5351690,6679037}. Matched filter can achieve the optimum performance, but the SU requires perfect knowledge of the PU signaling features \textit{a priori}. Cyclostationary detection takes advantage of the signal cyclostationary feature to distinguish signals from the stationary noise. In contrast, energy detection carries out the hypothesis test to determine the PU spectrum state, based on the energy of the collected PU signals. It features a low computational complexity, and is widely adopted in the literature. Upon obtaining the radio spectrum state, spectrum access dynamically adjusts the resources available to the SU, including frequency band, transmission time and transmission power, and accesses the licensed spectrum by taking into account the interference to the PU. The SU can access the licensed spectrum either when the PU is idle (opportunistic access) \cite{4494711}, or concurrently with the PU following a power control strategy to constrain the interference to the PU (spectrum sharing) \cite{4100173,4205091}. It is clear that appropriately designed spectrum sharing achieves higher throughput for the SU compared with the opportunistic spectrum access.
	
	It is worth noting that many contemporary wireless standards, such as IEEE 802.11 \cite{icslms1997wireless}, GSM \cite{mouly1992gsm}, and LTE \cite{remy2014lte}, have specified multiple transmit power levels to dynamically adapt to the fast changing environment and varying quality of service (QoS). The majority of CR studies in the literature did not take this into account, and the SU usually adopts a binary approach in reporting the radio spectrum state as idle or busy. In fact, for this multiple power level scenario, multi-parameter cognition is required, and the binary approach may not represent the most efficient spectrum utilization for the SU. The question then arises: how to exploit the variation in the PU power levels to design an intelligent spectrum sharing strategy? Naturally the answer to this question shall consist of spectrum sensing and spectrum access, which will be elaborated on in the next two subsections. 

	\subsection{Multi-level Spectrum Sensing}
	\label{Section_I_B}
	\begin{figure}[!t]
        \centering   
        \includegraphics[width=0.7\textwidth]{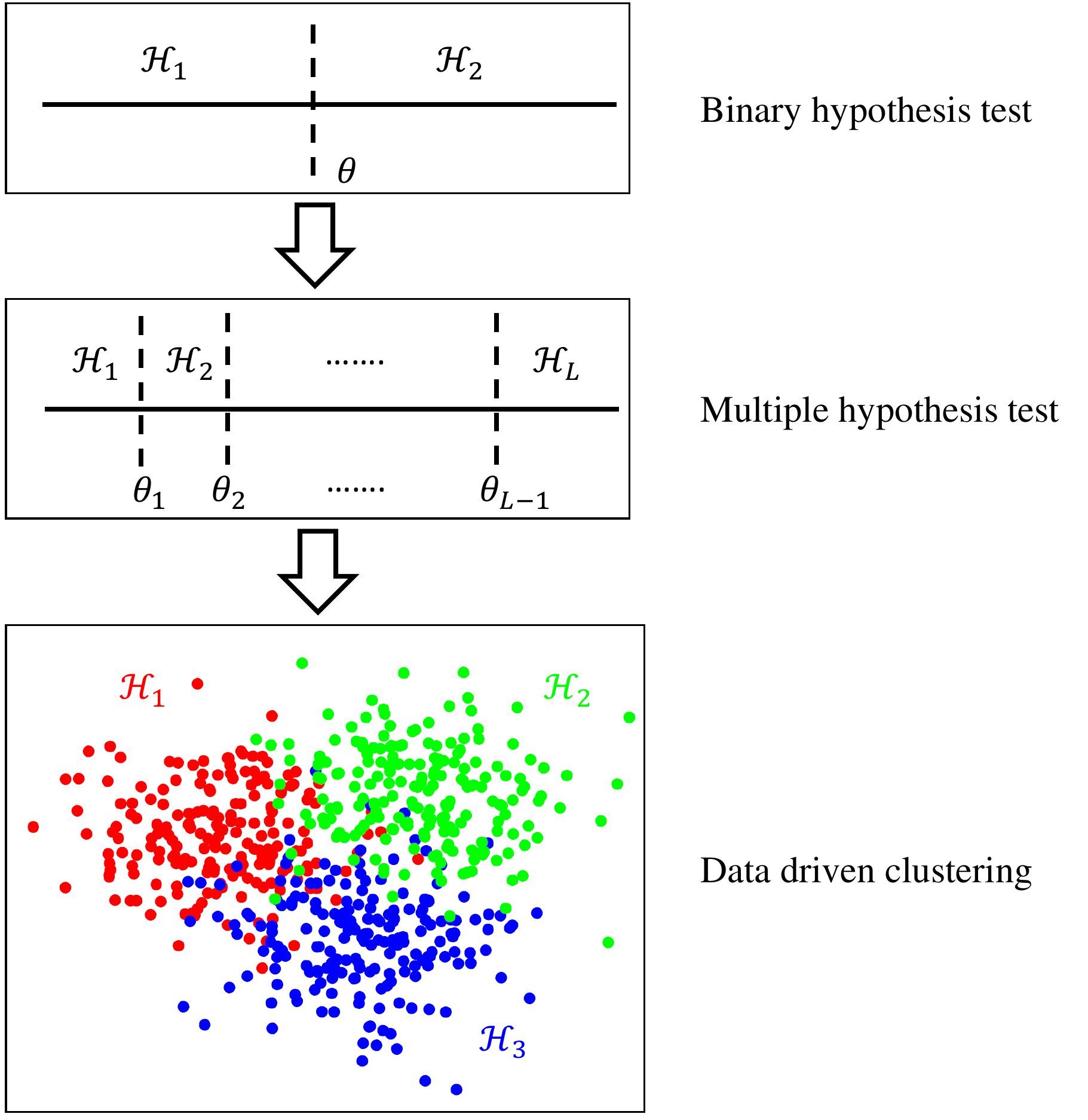}
        \caption{Spectrum learning: from hypothesis test to blind data driven.}        
        \label{binary_mul_data}
    \end{figure} 
    
	The priority in the design of an agile spectrum sensing method should aim to accurately map the sensing samples received by the secondary transmitter (ST) to the corresponding primary transmitter (PT) power levels. In a conventional \textit{binary} case (PT is ON/OFF), there are two kinds of errors, and the goal of the spectrum sensing is to determine a detection threshold $\theta$ (the upper part of Fig. \ref{binary_mul_data}). For example, given the target probability of false alarm and the noise power, $\theta$ can be simply determined by the Neyman-Pearson criterion\cite{9004667}. By contrast, the goal of the \textit{multi-level} case is to jointly determine multiple thresholds $\{\theta_i\}_{i=1}^{L-1}$ to separate $L$ different power levels through a multiple hypothesis test (the middle part of Fig. \ref{binary_mul_data}), which is far more complicated than the binary one. In essence, there are $L(L-1)$ kinds of errors, which are intertwined to exacerbate the complexity in threshold calculation. In \cite{7065293} and \cite{7911241}, the authors proposed an energy detection based multiple hypothesis test to derive the decision thresholds for the multiple power level identification. The results were extended to the scenarios with noise uncertainty \cite{7482689} and non-Gaussian transmission signals \cite{8166804}. However, the calculation of the thresholds in \cite{7065293,7911241,7482689,8166804} requires a large amount of prior knowledge at the ST, including the noise power and the PT transmit power mode (i.e., the number and exact value of different transmit powers, and the prior probability of each hypothesis). In practice, these parameters are unlikely to be available to the ST \textit{a priori}.
	
    In this paper, we aim to break the limits of the existing work, and achieve multi-level spectrum sensing with no or minimal prior information. We deviate from the above classical signal processing approaches, and machine learning arises as the tool of our choice for knowledge discovery to mine and extract the latent patterns reflecting the PT power level variation in the PT traffic data flow. On this basis, we propose a data-driven/machine learning based multi-level spectrum sensing scheme. It is fully blind in the sense that the ST does not require any prior knowledge of the noise power and the PT transmit power mode. Specifically, the proposed spectrum sensing scheme spans across two stages as shown in Fig. \ref{Power Model}. In Stage I (spectrum learning, a.k.a. the training phase in machine learning), the ST collects a multitude of signals that experience multiple PT transmit power level variation, and uses the Gaussian mixture models (GMMs) to capture the multi-level power characteristics inherent in the signals. Then, we introduce a Bayesian nonparametric method, referred to as conditionally conjugate Dirichlet process GMM (DPGMM), to automatically cluster the signals with the same PT transmit power level (the lower part of Fig. \ref{binary_mul_data}) and infer the model parameters (GMM parameters and PT power level duration distribution parameters). With the model parameters inferred in Stage I, the prediction part in Stage II (see Fig.~\ref{Power Model}) can easily identify the current PT power level through collecting PT signal samples. In this case, Stage I together with the prediction part in Stage II achieve fully blind multi-level spectrum sensing. Note that the second part in Stage II (ST transmission) will be detailed~next.
    
    \subsection{Multi-level Spectrum Access}
    With the big picture of multi-level PT radio environments learned in Section \ref{Section_I_B}, we next aim to establish a multi-level spectrum access strategy, which is the ultimate goal of spectrum sensing. To the best of our knowledge, this represents the first effort in such direction. For the SU, there is a fundamental tradeoff between two conflicting goals, namely, maximization of its own throughput and minimization of its interference to the PU. As typically there is no cooperation between the PU and SU, it is extremely difficult to optimize this tradeoff in practice. To provide a pragmatic solution to this dilemma, we first propose a new metric, referred to as the normalized power level alignment (NPLA), and it is defined as the time proportion that the ST matches its transmit power level to that of the PT.

    To optimize the NPLA performance, we propose two prediction-transmission structures (periodic and non-periodic) in Stage II for spectrum access which enables the ST to closely follow the PT power level variation. As discussed before, the prediction part can identify the current PT power level. On this basis, the transmission part adjusts the ST power according to the required signal-to-interference-plus-noise ratio (SINR) at the primary receiver (PR). Specifically, the periodic structure features a fixed prediction interval, and is straightforward in implementation. By contrast, the non-periodic structure dynamically determines the interval, which can be formulated as a partially observable decision problem. This motivates us to develop a new algorithm based on reinforcement learning \cite{reinlearning}, exploiting the PT power level duration distribution. This structure further improves the NPLA performance. Finally, we extend the prediction-transmission structure to an online scenario, where the number of PT power levels might change as a consequence of PT adapting to the environment fluctuation or QoS variation.
	
    \subsection{Contribution}
	In a nutshell, we propose a learning-based two-stage spectrum sharing strategy for a CR network, enabling fully blind spectrum sensing when the PT power varies with time in multiple levels, and designs an adaptive spectrum access strategy for the NPLA optimization. The main contributions of this paper can be summarized as follows.
	\begin{itemize}
	    \item We propose a novel data-driven/machine learning based multi-level blind spectrum sensing. The conditionally conjugate DPGMM with Bayesian inference is introduced to automatically cluster the signals and infer the model parameters, which is key to predict the PT power levels. 
        \item We propose a new metric, NPLA, to strike an excellent tradeoff between the secondary network throughput and the interference to the primary network.
	    \item To optimize the NPLA performance, we propose a prediction-transmission structure for spectrum access which enables the ST to closely follow the PT power level variation. Furthermore, the ST prediction interval is dynamically adjusted to achieve better performance.
	    \item The spectrum access method is extended to the online scenario to accommodate a more realistic situation, where the number of PT power levels might change after the inital spectrum learning.
	\end{itemize}

	\subsection{Related Work}
	Machine learning technology has recently played an important role in improving spectrum sensing. The work in \cite{8403655} presented an adversarial machine learning approach to launch jamming attacks on CR and introduces a defense strategy. Several supervised and unsupervised machine learning algorithms for cooperative spectrum sensing (CSS) were investigated in \cite{6635250}.	In \cite{6096448}, the combination of infinite GMM and CSS was proposed to detect the primary user emulation attacks. In \cite{1705.08164}, a convolutional neural network-based CSS scheme was developed to detect multiple bands simultaneously. A mobile CSS framework was proposed in \cite{8466022} for  large-scale heterogeneous cognitive networks.

	There are many efforts in sensing policy design for real-time decisions on which channel(s) to sense (dynamic multichannel selection). By contrast, our paper considers the single user and single channel case, and focuses on the design of multi-level spectrum sensing to differentiate different PT power levels. On this basis, we also consider the policy design to dynamically adjust the sensing intervals to improve the NPLA performance.
    
    The dynamic multichannel selection can be modelled as a partially observable Markov decision process (POMDP)\cite{4155374}. The partial observation in \cite{6200864,6362216,8437583,8303773,8532121,8359094,Naparstek2017DeepMR} originates from each SU being unable to scan all the channels at any one time due to energy and hardware constraints. Therefore, a sensing policy needs to be developed to balance between utilizing a spectrum opportunity for immediate access and collecting spectrum occupancy statistics to track spectrum opportunity for future exploitation. As finding an optimal channel sensing policy in general is computationally prohibitive with the increased number of channels, several efforts endeavor to find the optimal/near-optimal policy with low computational cost. In \cite{6200864,6362216,8437583}, the dynamic multichannel access problem is modelled as a restless multi-armed bandit problem. The time horizon is divided by interleaving exploration and exploitation epochs with growing lengths, and the optimal policy can be translated into determining the length and allocation of each epoch. Recently, deep reinforcement learning (DRL) based channel selection \cite{8303773,Naparstek2017DeepMR,8532121,8359094} has attracted great attention, and it aims to handle the correlated channels with unknown channel dynamics. The essence of DRL is to provide a good approximation of the objective value (Q-value), facilitating the handling of the large state and action spaces.

    It is worth noting that the access policy design (sensing interval) in our paper is also formulated as a POMDP, but the nature of our formulation is fundamentally different from that in \cite{6200864,6362216,8437583,8303773,Naparstek2017DeepMR,8532121,8359094}. The partial observation in our work comes from imperfect multi-level sensing results and access feedback. To tackle this challenging POMDP, we reduce the infinite time horizon to a finite one, leading to a computationally tractable solution. Most importantly, we mathematically prove that such practice does not sacrifice the optimality in the utility.

    \subsection{Organization and Notation}
	The rest of the paper is organized as follows. In Section \ref{System Model}, we discuss the system model for our proposed two-stage spectrum sharing strategy. In Section \ref{Bayesian Non}, we introduce a Bayesian nonparametric method and its inference for the model parameters. The prediction-transmission structure with an online extension, which are adaptive to the PT power level variation, is presented in Section \ref{Section Spectrum}. Simulation results and discussion are presented in Section \ref{Section Numerial Results} followed by conclusions in Section \ref{Conclustion}.
	
	$\mathcal{N}(\mu,S^{-1})$ denotes the Gaussian distribution with mean $\mu$ and precision $S$, $\mathcal{CN}(\mu,S^{-1})$ denotes the complex Gaussian distribution with mean $\mu$ and precision $S$, and $\mathcal{G}(a,b)$ denotes the Gamma distribution with shape parameter $a$ and scale parameter $b$. $\Gamma(\cdot)$ denotes the Gamma function and $\Gamma(\cdot,\cdot)$ denotes the incomplete Gamma function. $\left \lfloor \cdot \right \rfloor$ is the floor function. For convenience, we list most important symbols in Table~\ref{Table of Symbols}.
	
	\begin{table*}
		\caption{List of Symbols}
		\label{Table of Symbols}
		\begin{center}
			\begin{tabular}{c|c}
				\hline
				Symbol & Definition \\ 
				\hline
				\hline
				$l$, $L$ & The index and total number of the actual PT transmit power levels. \\
				\hline
				$k$, $K$ & The index and total number of the PT transmit power level estimated by the ST. \\
				\hline
				$m$, $M$ & The index of the PT hypothesis and the total number of the PT hypotheses in Stage I. \\
				\hline
				$n$, $N$ & The index of the ST action and the total number of sensing slots in Stage I. \\
				\hline
				$P_{PT,l}$ & The PT transmission power value on the $l$-th level. \\
				\hline
				$\mathcal{H}_l$ & The hypothesis that the PT transmits with $P_{PT,l}$.\\
				\hline 
				$\mathcal{\hat{H}}_{k}$ & The hypothesis determined by the ST that the PT transmits with $P_{PT,k}$. \\
				\hline
				$X_n$ & The test statistic in the $n$-th single sensing slot. \\
				\hline
				$N^{(s)}$ & The total number of samples collected by the ST in a single sensing slot. \\
				\hline
				$T_{ss},T_{st},T_{po}$ & The duration of a ST sensing slot, a ST transmission block, and a PT hypothesis. \\
				\hline
				$\tau_s$, $\tau_p$ & The discretized time of a ST transmission block and a PT hypothesis.\\
				\hline
				$\alpha$, $G_0$ & The concentration parameter and the base probability distribution of the Dirichlet process. \\
				\hline
				$z_n$ & The latent variable indicating which component that $X_n$ is associated with. \\
				\hline
				$N_k$ & The total number of observations assigned to the $k$-th component. \\
				\hline
				$\mu_k,S_k,\pi_k$ & The value, precision, and mixing proportion of the $k$-th component in the GMM. \\
				\hline
				$\lambda$, $R$, $W$, $\beta$ & The hyperparameters in the conditionally conjugate DPGMM. \\
				\hline
				$H_{kj}$ & The probability that the PT is operating under $\mathcal{H}_k$ while the detection by the ST is in favor of $\mathcal{H}_j$. \\
				\hline
				$C_{kj}$ & The probability that the PT transfers from the $k$-th transmit power level to the $j$-th level. \\
				\hline
				$p_{\tau}^k$ & The probability that the PT keeps operating with the $k$-th transmit power level at time $\tau$. \\
				\hline
				$a_{\tau}$ & The action that the ST will take at time $\tau$. \\
				\hline
				$\mathcal{T}_k$ & The longest time that the SU should transmit when operating in the $k$-th transmit power level. \\
				\hline
				$P_c$ & The probability of correct PT power level prediction in Stage II. \\
				\hline
				$U(\tau)$ & The NPLA performance from time 0 to $\tau$. \\
				\hline
			\end{tabular}
		\end{center}
	\end{table*}
	
	\section{System Model}
	\label{System Model}
	We consider a spectrum sharing CR network in Fig. \ref{system model}, with a primary network consisting of a PT and a PR, a secondary network consisting of a ST and a secondary receiver (SR), and a central site (broadcast tower). Transmission happens simultaneously in both networks sharing the same frequency band. Different from most CR networks considered in the literature, the PT operates with multiple (instead of binary: ON/OFF) power levels. The ST attempts to learn the model parameters, which will be defined later, and then optimize the spectrum access accordingly. The ultimate goal is to optimize the NPLA performance.
	
	To achieve this target, we propose a novel two-stage spectrum sharing strategy, as illustrated in Fig. \ref{Power Model}. Let $P_{PT,l}$, $l = 1,\cdots,L,$ be the transmit powers of the PT, where $P_{PT,1} < P_{PT,2} < \cdots < P_{PT,L-1}$ and $P_{PT,L} = 0$ indicates an idle PT. For convenience, hypothesis $\mathcal{H}_l$ indicates that the PT transmits with power $P_{PT,l}$. It is assumed that $P_{PT,l}$ undergoes a slow change, as shown in the figure. We define the time duration of each hypothesis as a random variable $T_{po}$, which is usually much larger than that of the ST sensing slot $T_{ss}$ and the ST transmission block $T_{st}$. In this paper, we consider a time discretized model, where $T_{ss}$ is the minimum time unit. We define $\tau_s = T_{st}/T_{ss}$ as the fixed discretized time duration of the ST transmission block, and $\tau_p = T_{po}/T_{ss}$ as the varied discretized PT power level duration. As discussed before, the prior knowledge on the PT transmit power mode, defined as the number of transmit power levels $L$, the exact values $P_{PT,l}$, and the prior probability of each hypothesis $\Pr \{\mathcal{H}_l \}$, is unknown to the ST, which is fundamentally different from the assumptions in \cite{7065293,7911241,7482689,8166804}. In the sequel, we describe in detail the operations of these two stages in Fig. \ref{Power Model}. 
	
	\begin{figure}
		\centering
		\includegraphics[width=0.6\textwidth]{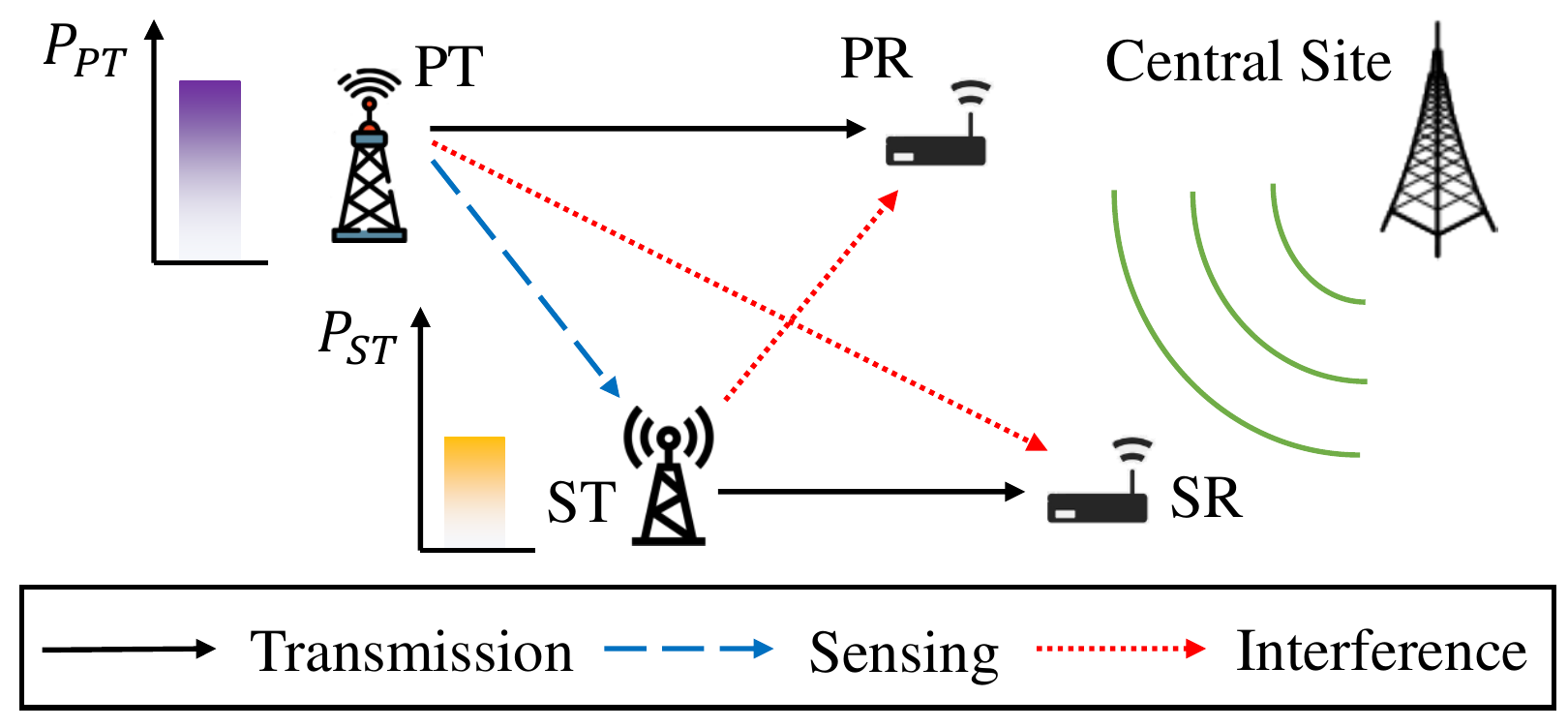}
		\caption{System model for a spectrum sharing CR network.}
		\label{system model}
	\end{figure}	
	
	\begin{figure*}
		\centering
		\includegraphics[width=0.95\textwidth]{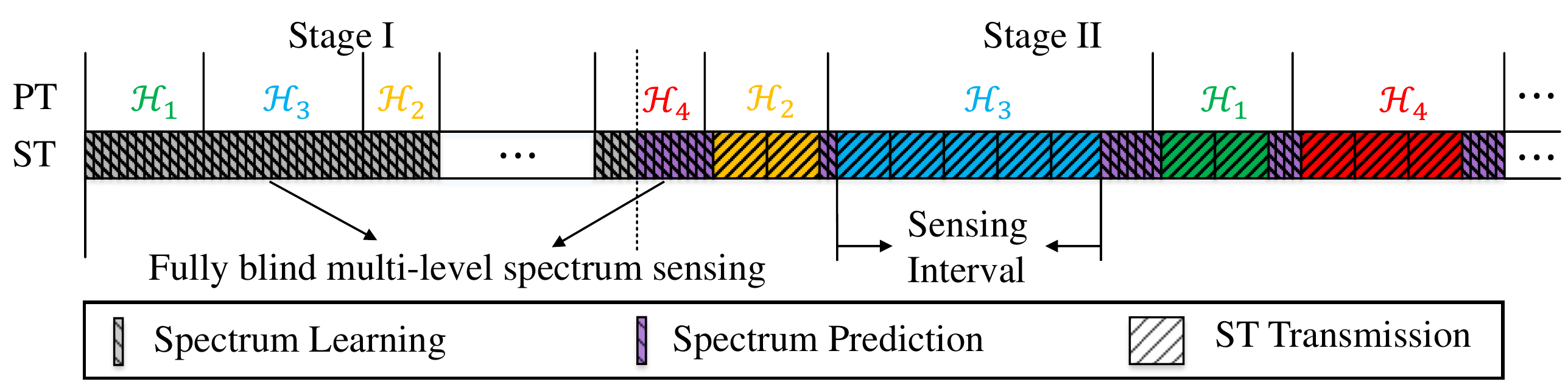}
		\caption{The proposed two-stage spectrum sharing strategy. The sensing slots of the ST in both stages have the same time duration $T_{ss}$. The sensing slots in Stage I are used for learning, while that in Stage II are for prediction.}
		\label{Power Model}
	\end{figure*}
	
	\subsection{Stage I}
	In this stage, the ST samples the received PT signals at a sampling frequency $f_s$ and collects $N^{(s)}$ samples in each sensing slot with duration $T_{ss}$ (for notation simplicity, we assume that $N^{(s)} = T_{ss} f_s$ is an integer). The ST observes $N$ sensing slots in Stage I and collects a total of $N N^{(s)}$ samples. It is assumed that the learning period is reasonably large so that it covers all $L$ hypotheses\footnote{There is a none-zero probability that some transmit power levels do not happen and are not observed by the ST during Stage I, even if the learning period is relatively large. In this case, these missed hypotheses can be viewed as small probability events. Consequently, they will have negligible impact on the performance of the proposed spectrum access method, and can be ignored.}. Thus, the $i$-th sample in the $n$-th sensing slot under hypothesis $\mathcal{H}_l$ can be given by
	\begin{equation}
	\label{R_i}
	R_{n}[i] = \sqrt{P_{PT,l}} s_n[i]+ u_n[i], {\ \ \ } \mathcal{H}_l, i = 1,\cdots,N^{(s)}; n= 1,\cdots, N,
	\end{equation}
    	where $\sqrt{P_{PT,l}} s_n[i]$ is the received primary signal in the $n$-th sensing slot with average power $P_{PT,l}$, and $u_n[i] \sim \mathcal{CN}(0,\sigma_u^2) $ is the additive white Gaussian noise. Following \cite{4489760}, we assume that $s_n[i]$ is an independent and identically distributed (i.i.d.) random variable with mean $0$ and variance $1$. Following \cite{4489760} and without loss of generality, we assume that $s_n[i]$ is complex PSK modulated signal\footnote{For other modulation schemes, the test statistic $X_n$ still follows a Gaussian distribution \cite{4489760}. Therefore, our proposed method is still valid for other modulation and/or adaptive modulation schemes.}.

	 The test statistic in the $n$-th sensing slot can be calculated as
	\begin{equation}
	\label{X_i}
	X_n = \dfrac{1}{N^{(s)}} \sum_{i=1}^{N^{(s)}} \left| R_{n}[i] \right| ^2, \mbox{\ \ \ } \mathcal{H}_l.
	\end{equation}
	When $N^{(s)}$ is large, according to the central limit theorem, the distribution of $X_n$ under hypothesis $\mathcal{H}_l$ can be approximated by a Gaussian one, and we have
	\begin{equation}
	\label{X_i_dis}
	X_n \sim 	\mathcal{N} \left( (\gamma_{st}^l + 1)\sigma_u^2,\dfrac{1}{N^{(s)}} (2\gamma_{st}^l + 1) \sigma_u^4 \right), \mbox{\ \ \ } \mathcal{H}_l,
	\end{equation}
	where $\gamma_{st}^l = P_{PT,l}/\sigma_u^2$ is the received signal-to-noise ratio (SNR) at the ST. Considering all the hypotheses, we establish that $X_n$ follows a mixed Gaussian distribution \cite{rasmussen2000infinite}
	\begin{equation}
	\label{mixG}
	X_n \sim \sum_{l = 1}^{L} \pi_l \mathcal{N}(\mu_l, S_l^{-1}),
	\end{equation}
	where $0 \leqslant \pi_l \leqslant 1$ is the mixing coefficients with $\sum_{l = 1}^{L} \pi_l = 1$. Each Gaussian density $\mathcal{N}(\mu_l, S_l^{-1})$ is a component of the mixture with mean value $\mu_l = (\gamma_{st}^l + 1)\sigma_u^2$ and precision $S_l = \left( (2\gamma_{st}^l + 1) \sigma_u^4 / N^{(s)} \right) ^{-1}$. 
	
	Given the observation set $\mathbf{X} = \{ X_1,\cdots,X_N \}$, the proposed Bayesian nonparametric method aims to infer the GMM parameter set $\{\boldsymbol{\theta}, \boldsymbol{\pi}, L \}$, where $\boldsymbol{\theta} = \{\theta_1,\cdots,\theta_L \}$ with $\theta_l = \{\mu_l, S_l \}$ and $\boldsymbol{\pi} = \{\pi_1,\cdots,\pi_L\}$. In other words, our method automatically clusters the signals with the same PT transmit power levels. In summary, Stage I establishes a big picture of the PT activities at the cost of an one-off overhead. After learning, the ST allocates the same number of power levels as that in the PT, with an initial ST power value $P_{ST,k}$ for each level $k$.

	\subsection{Stage II}
	In this stage, we propose two prediction-transmission structures (periodic and non-periodic) adapting to the PT power level variation for spectrum access. The main features of the structures can be summarized as follows.
	\begin{itemize}
		\item As shown in Fig. \ref{Power Model}, Stage II consists of two parts: prediction and transmission actions. In the prediction action (the sensing slots with purple color), the ST can easily identify the current PT power level $l$, which is jointly determined by the test statistic $X_n,n > N$, in \eqref{X_i} and the inferred GMM parameter set $\{\boldsymbol{\theta}, \boldsymbol{\pi}, L \}$. In the transmission action, the ST allocates its transmit power level $k$ to match the latest identified PT power level $l$ ($k = l$). Here, the corresponding $P_{ST,k}$ can be determined as follows. Assume that the required SINR for the PR is $\Gamma_0$ and the current received SINR is  $\Gamma_{PR}$. A nearby monitoring station (see Fig. \ref{system model}) of the PR transmits $\Gamma_{PR}$ to the central site, likely through optical fiber or microwave, which then broadcasts this information on a dedicated frequency. We assume that the ST is able to decode broadcast signals from the central site and communicate with the SR on different frequency bands. Through the broadcast nature, the ST obtains $\Gamma_{PR}$. If $\Gamma_{PR}<\Gamma_0$, the ST should reduce the transmit power and vice versa. In other words, $P_{ST,k}$ for each level $k$ will gradually approach a desired power value. This guarantees that the PR is well protected, while the secondary network obtains the highest possible throughput\footnote{The similar idea of using broadcasting mechanism was also suggested by Federal Communications Commission \cite[\text{p. 6}]{fcc0327} and adopted in 4G LTE systems in the form of inter-cell interference overload indicator \cite{letmag}.}. On this basis, it is clear that the alignment between ST and PT power levels can optimize the trade-off between the interference to primary network and throughput of the secondary network\footnote{Note that the channel state information for PT-ST and ST-PR is not required in our approach. In addition, the use of the received SINR has already included the impact of the channel fading.}. In the case that the ST mismatches the PT power level variation, either the PR is interfered below the required SINR or unnecessarily lower secondary network throughput is obtained.
		\item For the periodic structure, the prediction intervals are fixed. By contrast, in the non-periodic structure, the intervals are dynamically determined to enhance the NPLA performance. As shown in Fig. \ref{Power Model}, zero intervals are used to track the PT power level variation, while long intervals are selected to avoid unnecessary prediction. The  non-periodic structure will be elaborated on in Section \ref{SectionIVC}.
	\end{itemize}
	
	\section{Spectrum Learning Based on Bayesian Inference}
	\label{Bayesian Non}
	In this section, we focus on Stage I, and introduce a Bayesian nonparametric method to infer the GMM parameter set $\{\boldsymbol{\theta}, \boldsymbol{\pi}, L\}$ based on the observation set $\mathbf{X}$. As $L$ is unknown \textit{a priori}, the traditional methods, such as the K-mean and expectation maximization, are inapplicable. This motivates us to resort to Dirichlet process Gaussian mixture model (DPGMM)\cite{Gorur2010}, which takes into account the Gaussian mixture property and is able to identify the unknown number of Gaussian components. For specific Bayesian inference, we choose the Markov chain Monte-Carlo based Gibbs sampling method \cite{doi:10.1080/10618600.2000.10474879} considering its simplicity.
	
	In the following, we first review the preliminary knowledge on the Dirichlet process mixture model. On this basis, we introduce the DPGMM considering the specific distribution of the observation set $\mathbf{X}$. Furthermore, we modify the DPGMM to the conditionally conjugate case to simplify the inference process. Finally, we carry out Bayesian inference with Gibbs sampling method to infer $\{\boldsymbol{\theta}, \boldsymbol{\pi}, L \}$.
	
	\subsection{Dirichlet Process Mixture Model}
	The Dirichlet distribution is an extension of the Beta distribution for multivariate cases. It represents the probability of $K$ events given that the $k$-th event $x_k$ ($k = 1,\cdots,K$) has been observed $\alpha_k - 1$ times. The probability density function can be expressed as
	\begin{equation}
	\mathrm{Dir}(\alpha_1,\cdots,\alpha_K) = \dfrac{\Gamma\left( \sum_{k = 1}^K \alpha_k \right) }{\prod_{k = 1}^K \Gamma(\alpha_k)} \prod_{k = 1}^{K} \pi_k^{\alpha_k - 1},
	\end{equation}
	where $\pi_k$ is the probability of the $k$-th event $x_k$ with $\sum_{k = 1}^{K} \pi_k = 1$ and $\pi_k > 0$.
	
	In our application, the event $x_k$ represents the $k$-th possible PT transmit power level, which can not be observed explicitly. Instead, the explicit observation is the test statistic $X_n$. As $X_n$ is drawn from a distribution based on event $x_k$, we introduce the Dirichlet process (DP) to define the distribution of $X_n$. A random measure $G$ is said to be a Dirichlet process distributed with a base probability distribution $G_0$ and a concentration parameter $\alpha$, if we have
	\begin{equation}
	\label{DP_definition}
	\left( G(A_1),\cdots,G(A_i) \right) \sim \mathrm{Dir}(\alpha G_0(A_1),\cdots,\alpha G_0(A_i))
	\end{equation}
	for every finite measurable partition $\{A_1,\cdots,A_i\}$ of $\boldsymbol{\theta}$. It is written as $G \sim \mathrm{DP}(G_0,\alpha)$.
	
	Next we model the observation set $\mathbf{X}$ using the parameter $\boldsymbol{\theta}$ based on the DP mixture model. A DP mixture model is suitable for the clustering purposes, where the number of Gaussian components is not known \textit{a priori}. Here, $X_n$ can be regarded as an independent draw from the distribution $F(\theta_n)$, where each $\theta_n$ is an i.i.d. draw from a DP $G$. Mathematically, the DP mixture model can be expressed as \cite{teh2011dirichlet}
	\begin{equation}
	\label{DPMM}
	\begin{split}
	X_n | \theta_n & \sim F(\theta_n), \\
	\theta_n | G & \sim G, \\
	G | \{G_0,\alpha\} & \sim \mathrm{DP}(G_0,\alpha).
	\end{split}
	\end{equation}
	Note that the formulation \eqref{DPMM} represents the most general case. In our case, two different observations $X_n$ and $X_{n'}$ ($n \neq n'$) may follow the same distribution, and \eqref{DPMM} can not explicitly reveal such property. Therefore, following \cite{doi:10.1080/10618600.2000.10474879}, a latent variable $z_n$ is introduced to explicitly indicate which transmit power level that $X_n$ is associated with, and will be referred to as an indicator hereafter. Accordingly, an equivalent model can be obtained as
	\begin{equation}
	\label{DPGMM}
	\begin{split}
	X_n | \{\boldsymbol{z}, \boldsymbol{\phi} \} & \sim F(\phi_{z_n}), \\
	\phi_k | G_0 & \sim G_0, \\
	p ( z_n = k ) & = \pi_k, \\
	\boldsymbol{\pi} | \{ \alpha, K \} & \sim \mathrm{Dir}(\alpha/K,\cdots\alpha/K),
	\end{split}
	\end{equation}
	where $\boldsymbol{z} = \{ z_1,\cdots,z_N \}$ is the set of indicators, $\boldsymbol{\phi} = \{\phi_1,\cdots,\phi_{K}\}$ is the set of unique values in $\boldsymbol{\theta}$, and $\theta_n = \phi_{z_n}$. Hereafter, $K$ refers to the total number of Gaussian components, and each component consists of the observations that are determined by the ST as having the same transmit power level. Note that $\boldsymbol{\pi}$ is assumed to have a symmetric Dirichlet distribution, where all the concentration parameters are $\alpha/K$. This assumption is widely adopted when there is no prior knowledge of the mixing proportions \cite{doi:10.1080/10618600.2000.10474879}. Let $N_k$ denote the number of observations assigned to the $k$-th component, then $N_k$ follows a multinomial distribution
	\begin{equation}
	p(N_1,\cdots,N_K | \boldsymbol{\pi}) = \dfrac{N!}{N_1! \cdots N_K!} \prod_{k = 1}^{K} \pi_k^{N_k},
	\end{equation}
	where $\sum_{k=1}^{K} N_k = N$, and the distribution of the indicators~is
	\begin{equation}
	\label{p_z_pi}
	p(\boldsymbol{z} | \boldsymbol{\pi}) = \prod_{k = 1}^{K} \pi_k^{N_k}.
	\end{equation}
	We can integrate out the mixing proportions of the product of $p(\boldsymbol{\pi} | \alpha)$ in \eqref{DPGMM} and $p(\boldsymbol{z}|\boldsymbol{\pi})$ in \eqref{p_z_pi}, and the prior on $\boldsymbol{z}$ in terms of $\alpha$ is expressed as \cite{Gorur2010}
	\begin{equation}
	\begin{split}
	p(\boldsymbol{z}|\alpha) = \int p(\boldsymbol{z}|\boldsymbol{\pi}) p(\boldsymbol{\pi} | \alpha) \mathrm{d} \boldsymbol{\pi} 
	& = \int \dfrac{\Gamma (\alpha)}{\Gamma(\alpha/K)^K} \prod_{k = 1}^{K} \pi_k^{N_k + \alpha/K -1} \mathrm{d} \boldsymbol{\pi} \\
	& = \dfrac{\Gamma(\alpha)}{\Gamma(N + \alpha)} \prod_{k = 1}^{K} \dfrac{\Gamma(N_k + \alpha/K)}{\Gamma(\alpha/K)}.
	\end{split}
	\end{equation}
	As all the observations are exchangeable, if we assume that $\boldsymbol{z}_{-n} = \{z_1,\cdots,z_{n-1},z_{n+1},\cdots,z_N \}$ has been obtained, the conditional distribution for the individual indicator can be given by
	\begin{equation}
	\label{z_i}
	p(z_n = k | \boldsymbol{z}_{-n}, \alpha) = \dfrac{p(z_n = k, \boldsymbol{z}_{-n} | \alpha)}{p(\boldsymbol{z}_{-n} | \alpha)} = \dfrac{N_{-n,k} + \alpha/K}{N - 1 + \alpha},
	\end{equation}
	where $N_{-n,k}$ is the number of samples excluding $X_n$ in the $k$-th component. Similarly, the prior distribution of $\theta_n$ over $\boldsymbol{\theta}_{-n} = \{\theta_1,\cdots,\theta_{n-1},\theta_{n+1},\cdots,\theta_N \}$ can be written as
	\begin{equation}
	\label{theta_i}
	\theta_n | \boldsymbol{\theta}_{-n} \sim \frac{N_{-n,k}}{N-1+\alpha} \sum_{k=1}^{K} \delta(\phi_k) + \frac{\alpha}{N-1+\alpha} G_0.
	\end{equation}
	
	\subsection{Conditionally Conjugate Dirichlet Process Gaussian Mixture Model}
	Recall that the observation $X_n$ in \eqref{mixG} follows a mixed Gaussian distribution given by
	\begin{equation}
	\label{X_i_k}
	X_n | \{\boldsymbol{\mu}, \boldsymbol{S}, \boldsymbol{\pi} \} \sim \sum_{k= 1}^{K} \pi_k \mathcal{N}(\mu_k, S_k^{-1}),
	\end{equation}
	where $k = \{1,\dots,K\}$ replaces $l=\{1,\cdots,L\}$, as $k$ denotes the index of inferred hypotheses while $l$ is the index of the real hypothesis. Therefore, $X_n$ can be modeled as a DPGMM and expressed as
	\begin{equation}
	\label{DPGMM_X}
	\begin{split}
	X_n | \{ \boldsymbol{z}, \boldsymbol{\phi} \} & \sim \mathcal{N} (\mu_{z_n},S_{z_n}^{-1}),\\
	( \mu_k,S_k^{-1}) & \sim G_0, \\
	p ( z_n = k ) & = \pi_k, \\
	\boldsymbol{\pi} |\{ \alpha, K \} & \sim \mathrm{Dir}(\alpha/K,\cdots\alpha/K).
	\end{split}
	\end{equation} 
	In \eqref{DPGMM_X}, $G_0$ represents a prior guess of the distributions of $\mu_k$ and $S_k^{-1}$ in the DPGMM. Its choice is usually guided by mathematical convenience, and the conjugate form is widely adopted. In our case, the distribution of $G_0$ specifies the prior on the mixture Gaussian distributions parameters $\boldsymbol{\mu} = \{ \mu_1,\cdots,\mu_K \}$ and $\boldsymbol{S} = \{ S_1,\cdots,S_K \}$, and it can be expressed in a conjugate form \cite{rasmussen2000infinite}
	\begin{equation}
	\label{mu_S_j}
	\begin{split}
	\mu_k | \{ S_k, \xi, \rho \} & \sim \mathcal{N}(\xi,(\rho S_k)^{-1}), \\
	S_k | \{ \beta, W \} & \sim \mathcal{G}(\beta, W^{-1}),
	\end{split}
	\end{equation}
	where $\xi$, $\rho$, $\beta$ and $W$ are the hyperparameters for the DPGMM. It is clear that the prior distribution of $\mu_k$ is dependent on $S_k$. This undesirable property is inevitable due to the conjugacy requirement for $G_0$. To remove such dependency, we modify the original conjugate feature in the DPGMM and introduce the conditionally conjugate version of DPGMM to model $\mathbf{X}$. In a conditionally conjugate DPGMM, \eqref{mu_S_j} can be rewritten as \cite{rasmussen2000infinite}
	\begin{equation}
	\label{mu_S_j_2}
	\begin{split}
	\mu_k | \{ \lambda, R \} & \sim \mathcal{N}(\lambda,R^{-1}), \\
	S_k | \{ \beta, W \} & \sim \mathcal{G}(\beta, W^{-1}),
	\end{split}
	\end{equation}
	where $R$ is the hyperparameter. To complete the conditionally conjugate DPGMM and capture the features inherent in $\textbf{X}$, we need to give suitable values for the hyperparameters. However, the exact values are hard to know \textit{a priori}, and small changes on them will dramatically affect the model performance. To achieve the robustness for the model, we impose vague priors for the hyperparameters following \cite{Gorur2010},
	\begin{equation}
	\label{vague}
	\begin{split}
	\lambda & \sim \mathcal{N}(\mu_y, S_y^{-1}), \\
	R & \sim \mathcal{G}(1,S_y),\\
	W & \sim \mathcal{G}(1,S_y^{-1}),\\
	\beta^{-1} & \sim \mathcal{G} (1,1),
	\end{split}
	\end{equation}
	where the hyperpriors $\mu_y$ and $S_y$ refer to the empirical mean and precision of $\mathbf{X}$, respectively. In theory, the prior should not depend on the observations. However, as shown in \cite{Gorur2010}, the formulation for the priors in \eqref{vague} is equivalent to normalizing observations, and a wide range of parameters in the priors lead to similar inference results.
	
	The conditionally conjugate DPGMM in \eqref{vague} can be graphically represented in Fig. \ref{Hierarchical model}.
	
	\begin{figure}
		\centering
		\includegraphics[width=0.6\textwidth]{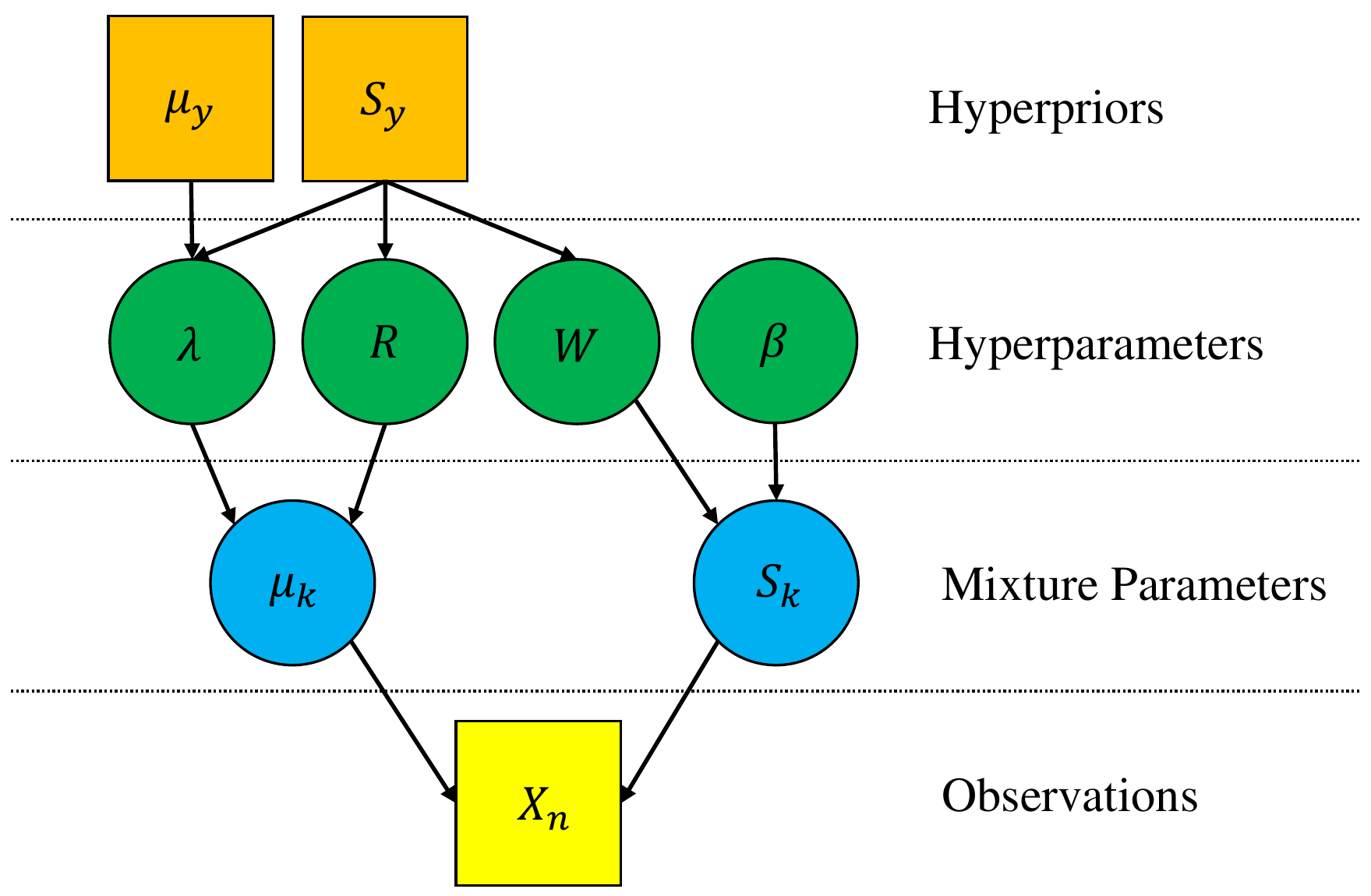}
		\caption{Graphical representation of the conditionally conjugate DPGMM.}
		\label{Hierarchical model}
	\end{figure}
	
	\subsection{Inference Using Gibbs Sampling}
	Given the conditionally conjugate DPGMM and observation $\textbf{X}$, now we use the Markov chain Monte-Carlo algorithm based on Gibbs sampling to infer $\{ \boldsymbol{\theta}, \boldsymbol{z}, L \}$. With the latent variable set $\boldsymbol{z}$, we can obtain the mixing proportion set $\boldsymbol{\pi}$. In the algorithm, we update the variables iteratively by sampling each variable from the posterior distribution conditioned on the others.
	
	Given the likelihood of $\mu_k$ and $S_k$ in \eqref{X_i_k} and their priors in \eqref{mu_S_j_2}, we can multiply the priors by the likelihood conditioned on $\boldsymbol{z}$ and obtain the conditional posterior distributions of $\mu_k$ and $S_k$, which can be given by
	\begin{equation}
    	\label{inferpara}
    	\begin{split}
    	\mu_k | \{ \boldsymbol{z}, S_k, \lambda, R \} & \sim
    	\mathcal{N}\left( \dfrac{ S_k \sum_{n = 1}^N \mathbbm{1}_{\mathbf{A}}\left(n\right) X_n  + \lambda R }{N_k S_k + \lambda}, \dfrac{1}{N_k S_k + \lambda} \right), \\
    	S_k | \{\boldsymbol{z}, \beta, W \} & \sim
    	\mathcal{G} \left( \beta + N_k, \left[ \dfrac{1}{\beta + N_k} \left(  W \beta + \sum_{n = 1}^N \mathbbm{1}_{\mathbf{A}}\left(n\right) \left( X_n - \mu_k \right)^2 \right) \right]^{-1}  \right),
    	\end{split}
    \end{equation}
	where $\mathbf{A} = \left\{ n | z_n = k \right\}$ and the indicator function $\mathbbm{1}_{\mathbf{A}}\left(n\right) = 1$ when $n \in A$ and $\mathbbm{1}_{\mathbf{A}}\left(n\right) = 0$ otherwise. Similarly, with the likelihood of the hyperparameters $\lambda$, $R$, $\beta$ and $W$ given in \eqref{mu_S_j_2}, and their priors given in \eqref{vague}, the conditional posteriors of the hyperparameters can be written as
	\begin{equation}
	\label{inferhyper}
	\begin{split}
	\lambda | \{ \boldsymbol{\mu}, R \}& \sim \mathcal{N} \left( \dfrac{\mu_y S_y + R \sum_{k = 1}^{K} \mu_k }{S_y + KR}, \dfrac{1}{S_y + KR} \right), \\
	R | \{\boldsymbol{\mu}, \lambda \}& \sim \mathcal{G} \left( K+1, \left[ \dfrac{1}{K + 1} \left(S_y + \sum_{k=1}^{K} \left( \mu_k - \lambda \right)^2 \right) \right]^{-1} \right), \\
	W | \{ \boldsymbol{S}, \beta \}& \sim \mathcal{G} \left( K \beta + 1, \left[ \dfrac{1}{K \beta + 1} \left(S_y + \beta \sum_{k=1}^{K} S_k \right) \right]^{-1} \right), \\
	\beta | \{ \boldsymbol{S}, W \}& \propto \Gamma \left( \dfrac{\beta}{2 }\right)^{-K} \exp \left( \dfrac{-1}{2 \beta}\right) \left( \dfrac{\beta}{2}\right)^{\left( K \beta -3 \right) / 2 } \prod_{k=1}^{K} \left( S_k W \right)^{\beta/2} \exp \left( -\dfrac{\beta S_k W}{2} \right).
	\end{split}
	\end{equation}
	Note that the exact distribution of $\beta$ is not given, but its samples can be obtained following \cite{rasmussen2000infinite}. In detail, we capture the log-concave property of $p\left( \log (\beta) | \boldsymbol{S}, W \right)$ and generate samples independently from the distribution of $\log(\beta)$. Then, we use the adaptive rejection sampling technique \cite{gilks1992adaptive} to transform these samples and obtain the value of $\beta$. 
	
	Before introducing the conditional posterior for $\boldsymbol{z}$, we note that its prior in \eqref{z_i} only suits the case where $K$ is a fixed finite parameter. To make the conditionally conjugate DPGMM applicable to the scenario with an infinite number of Gaussian components, we let $K \rightarrow \infty$ in \eqref{z_i} and the conditional prior reaches the following limits
	\begin{equation}
	p(z_n = k | \boldsymbol{z}_{-n}, \alpha) =
	\begin{cases}
	\dfrac{N_{-n,k}}{N - 1 + \alpha}, & \mbox{$N_{-n,k} > 0$}, \\
	\dfrac{\alpha}{N - 1 + \alpha}, & \mbox{$N_{-n,k} = 0$}. \\
	\end{cases}
	\end{equation}

	Now we combine the priors of $\boldsymbol{z}$ with its likelihood given in \eqref{z_i}, and the conditional posterior can be given by \eqref{two_integral}.
	
	\begin{figure*}
	\begin{equation}
	\label{two_integral}
	\begin{split}
	& p(z_n = k| \boldsymbol{z}_{-n},\mu_k, S_k, \alpha) \\
	& \propto	\begin{cases}
	p(z_n = k | \boldsymbol{z}_{-n}, \alpha) p(X_n | \mu_k, S_k,\boldsymbol{z}_{-n} ) \propto \dfrac{N_{-n,k} \mathcal{N}(X_n | \mu_k, S_k,)}{N- 1 + \alpha}, & \mbox{$N_{-n,k} > 0$}, \\
	p(z_n = k | \boldsymbol{z}_{-n}, \alpha) \int p(X_n | \mu_k, S_k) p(\mu_k, S_k | \lambda, R, \beta, W) d\mu_k d S_k, & \mbox{$N_{-n,k} = 0$}. \\
	\end{cases}
	\end{split}
	\end{equation}
	\end{figure*}

	Unfortunately, the integral in the second case of \eqref{two_integral} is not analytically tractable. Therefore, the auxiliary variable sampling algorithm \cite{doi:10.1080/10618600.2000.10474879} is employed. In detail, we add $k_0$ auxiliary components in each sampling iteration to represent the effect of the auxiliary components. Note that the posterior probability that $X_n$ belongs to the $k$-th component is proportional to $N_{-n,k}$, we use $\alpha/k_0$ for the auxiliary components and rewrite \eqref{two_integral} as
	\begin{equation}
	\label{post}
	p(z_n = k' | \boldsymbol{z}_{-n},\mu_{k'}, S_{k'}, \alpha) =
	\begin{cases}
	\dfrac{q N_{-n,k'}}{N-1+\alpha} \mathcal{N} (X_n | \mu_{k'}, S_{k'}), & \mbox{$1 \leqslant k' \leqslant K'$},\\
	\dfrac{q \alpha/k_0}{N-1+\alpha} \mathcal{N} (X_n | \mu_{k'}, S_{k'}), & \mbox{$K' < k' \leqslant K' + k_0$},
	\end{cases}
	\end{equation}
	where $k'$ is the index of unique components in each iteration during the Gibbs sampling algorithm, $q$ is the appropriate constant for normalization, and $K'$ is the number of active components. To this end, we summarize the sampling algorithm in Algorithm~\ref{Gibbs Sampling algorithm}.
	
	\begin{algorithm}
		\caption{Gibbs sampling algorithm.}  
		\label{Gibbs Sampling algorithm}   
		\begin{algorithmic}[1]
			\REQUIRE ~~ \\
			Initial observation set $\mathbf{X}$. Set a component which contains all $X_n$. Initialize the hyperparameters $\lambda$, $R$, $\beta$, and $W$, the hyperpriors $\mu_y$ and $S_y$, and the indicator set $\boldsymbol{z}$;
			\ENSURE ~~ \\
			The sets $\boldsymbol{z}$, $\boldsymbol{\mu}$, and $\boldsymbol{S}$.
			\STATE Update $\boldsymbol{\mu}$ and $\boldsymbol{S}$ conditional on the indicator $\boldsymbol{z}$ and hyperparameters $\lambda$, $R$, $\beta$ and $W$ following
			\eqref{inferpara}; \label{dddd}
			\STATE Update the hyperparameters $\lambda$, $R$, $\beta$ and $W$ conditional on hyperpriors $\mu_y$ and $S_y$ following \eqref{inferhyper}; 
			\FOR{$n = 1,2,\cdots,N$}
			\IF{$z_n \neq z_{n'}$ for all $n' \in \{1,\cdots,n-1,n+1,\cdots,N\}$} 
			\STATE Let $z_n = K' +1$. 
			\ENDIF
			\STATE Draw $\mu_{k'}$ and $S_{k'}$ for $k' \in \{ K'+1,\cdots,K'+ k_0 \}$ following \eqref{mu_S_j_2}.
			\STATE Update the indicator $z_n$ following \eqref{post}.
			\STATE Discard the empty components.
			\ENDFOR
			\label{code:fram:select}  
		\end{algorithmic}  
	\end{algorithm}  
	
	\section{Proposed Prediction-Transmission Spectrum Access Structure}
	\label{Section Spectrum}
	In this section, we propose two prediction-transmission structures (periodic and non-periodic) for spectrum access. We first introduce the functions of the prediction and transmission parts. Then, we present the details of how to determine the prediction intervals. As directly optimizing the NPLA performance is intractable, we propose to maximize an expected average utility by imposing reward (penalty) for power level match (mismatch). At last, we extend the prediction-transmission structure to an online scenario, which can handle the case where the number of the PT power levels $L$ changes after Stage I.
	
	\subsection{Functions of the Prediction and Transmission Parts}
	In the prediction part, with the inferred GMM parameters $\{ \boldsymbol{\theta},\boldsymbol{\pi},L \}$,  the ST can easily identify the current PT power level by a single sensing slot with test statistic $X_n$, based on the following criterion
	\begin{equation}
	\label{CLuster}
    k = \arg \max_{k \in \{1,\cdots, K\}} \Pr(\mathcal{H}_k | X_n) = \arg \max_{k \in \{1,\cdots, K\}} \pi_k \mathcal{N} (X_n | \mu_{k}, S_{k}).
	\end{equation}
	
	In the transmission part, the ST allocates its transmit power level $k$ to match the PT power level $l$, which means $k = l$. Note that $K$ is an estimate of $L$. In the simulation, we find that the conditionally conjugate DPGMM is able to identify $K$ ($K = L$) with a high probability\footnote{In the rare case of $K\neq L$, the NPLA performance will degrade and another round of learning will be necessary.}. When the ST matches the PT power level, it will adjust its transmit power of each level, which has been explained in Section II-B.
	
	\subsection{Periodic Structure for the Prediction-Transmission Structure}
	In periodic structure, the ST periodically predicts the PT power level in a sensing slot with duration $T_{ss}$, and then transmits for a transmission block with duration $T_{st} = \tau_s T_{ss}$. This structure can be implemented in a straightforward way, and the prediction interval remains constant whether the SR decodes the signal successfully or not.
	
	\subsection{Non-periodic Structure for the Prediction-Transmission Structure}
	\label{SectionIVC}
	For a non-periodic structure, an essential question is how to determine the prediction intervals. Basically, this needs to find out the distribution of the PT power level duration $T_{po} = \tau_p T_{ss}$, and the corresponding observation of each action. If the prediction action is taken, the observation will be the PT power level identified from the received test statistic $X_n$ according to \eqref{CLuster}. If the transmission action is taken, the observation will be a positive or negative acknowledgment (ACK) received by the ST from the SR, which indicates whether the SR correctly receives the signal from the ST. Based on these observations, the ST can infer the PT transmit power level, and then dynamically adjust the prediction intervals. As the inference may not be correct all the time, the dynamic adjustment of intervals can be formulated as a partial observable decision problem \cite{pomdp}. To address this problem, we first estimate the distribution parameter of $\tau_p$ based on $\boldsymbol{z}$. Then, we develop a reinforcement learning algorithm that correlates the observations with the current PT power level identification. 
	
	\subsubsection{PT Power Level Duration Distribution}
	Without loss of generality, the discretized PT power level duration $\tau_p$ of all hypotheses is assumed to follow a Poisson distribution with the same mean value $\nu$\footnote{If the time duration is correlated between adjacent hypotheses, the proposed algorithm also works, but at the cost of significantly increased computational complexity.}. Its cumulative distribution function can be given by
	\begin{equation}
	\label{F_X_tau}
	F_{\nu}(\tau) = \dfrac{\Gamma \left( \tau + 1,  \nu \right)  }{\Gamma \left( \tau \right) },
	\end{equation}
	where $\nu$ is the mean value of the Poisson distribution. In Stage I, we have attributed all signals $X_n$ to different components by learning the GMM parameter set $\{\boldsymbol{\theta}, \boldsymbol{\pi}, L \}$, thus we can easily obtain the samples of the PT operation durations $\tau_p$. Then, $\nu$ can be estimated through maximum likelihood estimation as
	\begin{equation}
	\nu = \dfrac{1}{M} \sum_{m = 1}^{M} \tau_{p}^m,
	\end{equation}
	where $\tau_{p}^m$ is the discretized duration of the $m$-th PT hypothesis, and $M$ is the number of PT hypotheses detected in Stage~I. 
	
	If the PT has been keeping the same power level for time $\tau$ immediately after a power level change, the probability that the PT will continue staying in the same power level during the following discretized time duration $\tau_0$ can be expressed as
	\begin{equation}
	g_{\tau}(\tau_0) = \dfrac{1-F_{\nu}(\tau+\tau_0)}{1 - F_{\nu}(\tau)}, {\ \ \ } \tau_0 \geqslant 1.
	\end{equation}
	
	\subsubsection{PT Power Transition Probability}
	In Stage I, the observations in the conditionally conjugate DPGMM are indefinite exchangeable. Therefore, the ST can not infer the PT transmit power level transition probability directly. Alternatively, we take a pragmatic approach and use the mixing proportion set $\boldsymbol{\pi}$, obtained by counting the number of observations $X_n$ in each power level, to represent the occupancy frequency of different power levels. Thus, we define an $L \times L$ transition probability matrix $\mathbf{C}$ for the PT transmit power levels. The element $C_{kj}, k,j \in \{1,\cdots,L\}$ of $\mathbf{C}$ refers to the probability that the PT transfers from the $k$-th to the $j$-th transmit power level, and is given by
	\begin{equation}
	\label{C_kf}
	C_{kj} = 
	\begin{cases} 
	\dfrac{\pi_j}{1 - \pi_k},  & \mbox{$k \neq j$}, \\
	0, & \mbox{$k = j$}.
	\end{cases}
	\end{equation}
	If the PT does not transmit with the same power level under two consecutive hypotheses then $C_{kk} = 0$. We also define the vector $\boldsymbol{c}_k$ as the $k$-th row of matrix $\mathbf{C}$.
	
	\subsubsection{Estimation Probability Matrix of the PT Power Level}
	If the PT transmits with binary power levels, the prediction performance of the ST is dictated by the detection and false alarm probabilities. However, this is no longer the case when the PT has multiple transmit power levels. Instead, we defined an $L \times L$ prediction probability matrix $\mathbf{H}$, with the element $H_{kj} = \Pr \{\mathcal{\hat{H}}_{j}|\mathcal{H}_{k}\}, k,j \in \{1,\cdots,L\}$, representing the probability that the PT is operating under hypothesis $\mathcal{H}_k$ while the detection by the ST is in favor of hypothesis $\mathcal{H}_j$. $\mathcal{\hat{H}}_j$ represents that the ST identifies the PT operating in the $j$-th transmit power level following \eqref{CLuster}. Thus the element $H_{kk}$ refers to the detection probability for hypothesis $\mathcal{H}_k$. We also define vector $\boldsymbol{h}_{j}$ as the $j$-th column of matrix $\mathbf{H}$.
	
	\subsubsection{Benefits of the ST from Prediction and Transmission Actions}
	\label{subsub_4}
    We denote the prediction action of the ST at time $\tau$ as $a_{\tau} = 0$, and its observation as $O_{\tau}^E \in \{ \mathcal{H}_k \}$. We also denote the transmission action at time $\tau$ as $a_{\tau} = 1$, and its observation as $O_{\tau}^A \in \{\mathcal{A}(\text{positive ACK}), \mathcal{N}(\text{negative ACK})\}$. It is assumed that the positive/negative ACK is returned to the ST through a dedicated feedback channel. Let $p_{\tau+\tau_0}^k$ denote the conditional probability that the PT keeps operating with the $k$-th transmit power level at time $\tau+\tau_0$ given $p_0^k = 1$ and $\{a_0,\cdots,a_{\tau},O_0,\cdots,O_{\tau}\}$, where $O_{\tau} \in \{ O_{\tau}^E, O_{\tau}^A \}$. Based on Bayesian rule, the probabilities of PT staying in the $k$-th power level at time $\tau+\tau_0, \tau_0 \in \{1,\tau_s\}$, can be given as follows.
	
	When $a_{\tau} = 0, \tau_0 = 1$, we have
	\begin{equation}
	\label{sensing}
	p_{\tau+1}^k(O^E_{\tau+1}) = \begin{cases}
	\dfrac{p_{\tau}^k g_{\tau}^E H_{kk}} {p_{\tau}^k g_{\tau}^E H_{kk} + (1 - p_{\tau}^k g_{\tau}^E)\boldsymbol{c}_{k} \boldsymbol{h}_{k}}, & \mbox{$O^E_{\tau+1} = \mathcal{H}_k$}, \\
	\dfrac{p_{\tau}^k g_{\tau}^E H_{jk}}{p_{\tau}^k g_{\tau}^E H_{jk} + (1 - p_{\tau}^k g_{\tau}^E) \boldsymbol{c}_{k} \boldsymbol{h}_{j}}, & \mbox{$O^E_{\tau+1} = \mathcal{H}_j, j \neq k$}.
	\end{cases}
	\end{equation}

	When $a_{\tau} = 1,\tau_0 = \tau_s$, we have
	\begin{equation}
	\label{transmission}
	p_{\tau+\tau_s}^k(O^A_{\tau+\tau_s}) = \begin{cases}
	\dfrac{p_{\tau}^k g_{\tau}^A}{p_{\tau}^k g_{\tau}^A + (1 - p_{\tau}^k g_{\tau}^A) \sum_{j = 1}^{L} C_{kj} p_{kj}^{ACK}}, & \mbox{$O^A_{\tau+\tau_s}$ = $\mathcal{A}$}, \\
	0, & \mbox{$O^A_{\tau+\tau_s}$ = $\mathcal{N}$,}
	\end{cases}
	\end{equation}
	where we assume the positive and negative ACKs from the SR can be received by the ST error free. Besides, $p_{kj}^{ACK}$ denotes the probability that the SR decodes signals correctly when ST is on the $k$-th level and PT is one the $j$-th level. For simplicity, we set $p_{kj}^{ACK} = 1$ when $k > j$ and $p_{kj}^{ACK} = 0$ otherwise. In \eqref{sensing} and \eqref{transmission}, $g_{\tau}^E =g_{\tau}(1)$ and $g_{\tau}^A = g_{\tau}(\tau_s)$, where the superscripts $E$ and $A$ represent the prediction and transmission, respectively. The expected utility that the ST obtains at time $\tau$ with the PT operating with the $k$-th transmit power level, which is $r(p_{\tau}^k,a_{\tau},k)$, can be given by
	\begin{equation}
	\label{ref_transmission}
	r(p_{\tau}^k,a_{\tau},k) = \begin{cases}
	\left[ p_{\tau}^k g_{\tau}^A D_k - (1 - p_{\tau}^k g_{\tau}^A) \sum_{j = 1}^{L} C_{kj} Y_j \right] \tau_s, & \mbox{$a_{\tau} = 1$}, \\
	0, & \mbox{$a_{\tau} = 0$}.
	\end{cases}
	\end{equation}
	Hereafter, $k$ denotes the ST transmit power level, $l$ is the PT transmit power level, and $K = L$. In \eqref{ref_transmission}, $D_k$ is the reward that the ST will receive when the ST transmits with the $k$-th power level and $k = l$. Meanwhile, $Y_k$ is the penalty that the ST will receive when the ST transmits with the $k$-th power and $k \neq l$.
	
	\subsubsection{Prediction-Transmission Structure Optimization}
	An ST access policy $\epsilon = \left[ d_0, \cdots ,d_{\tau}, \cdots \right] $ maps the ST belief space $\{p_{\tau}^k, \tau \geqslant 0 \}$ to the action space $\{ a_{\tau}, \tau \geqslant 0 \} $. Thus, the optimal prediction-transmission
	policy aims to maximize the expected average utility, which can be given by
	\begin{equation}
	\label{max_lim}
	\max_{\epsilon} \lim_{M' \rightarrow \infty} \dfrac{\sum_{m=M+1}^{M'} \sum_{{\tau}=0}^{\tau_p^m - 1} r(p_{\tau}^k,a_{\tau},k) a_{\tau} / (M' - M)}{\sum_{m=M+1}^{M'} \tau_p^m / (M' - M)}.
	\end{equation}
	Note that if $a_{\tau} = 1$ and $\tau_s \geqslant 2$, $a_{\tau + \tau_0}, \tau_0 = \{1,\cdots,\tau_s - 1\}$ is not defined in Section \ref{subsub_4}. This is because the ST keeps transmitting between time $\tau$ and $\tau + \tau_s$, thus no action is taken during this period. Therefore, we assign $a_{\tau + \tau_0} = 0, \tau_0 = \{1,\cdots,\tau_s - 1\}$ if $a_{\tau} = 1$ and $\tau_s \geqslant 2$ in the action space $\{a_{\tau}, \tau \geqslant0 \}$. The total utility obtained by the ST during each PT hypothesis is i.i.d.. Thus, by the law of large numbers, the maximization problem in \eqref{max_lim} can be rewritten as
	\begin{equation}
	\max_{\epsilon} \mathrm{E} \left[ \sum_{\tau=0}^{ \tau_p - 1 } r(p_{\tau}^k,a_{\tau},k) a_{\tau} \right].
	\end{equation}
	In other words, instead of maximizing the average expected utility, we translate the problem to the maximization of the utility in each PT hypothesis. For convenience, let $V_{\epsilon}(0,p_{\tau}^k,k)$ denote the expected utility that can be achieved in each PT hypothesis following policy $\epsilon$, which is
	\begin{equation}
	V_{\epsilon}({\tau} = 0,p_{\tau}^k = 1,k) = E_{\epsilon} \left[ \sum_{{\tau}=0}^{\tau_p - 1}r(p_{\tau}^k,a_{\tau},k) a_{\tau} \right].
	\end{equation}
	Then, we can define the maximum utility that can be achieved by the ST in each PT hypothesis as
	\begin{equation}
	\label{V_0}
	V({\tau} = 0,p_{\tau}^k = 1,k) = \sup_{\epsilon} V_{\epsilon}({\tau} = 0,p_{\tau}^k = 1,k).
	\end{equation}
	In \eqref{V_0}, $V({\tau},p_{\tau}^k,k)$ is directly determined by the action taken at time ${\tau}$, thus it can be expressed as
	\begin{equation}
	\label{optimality equation}
	V({\tau},p_{\tau}^k,k) = \max \left\lbrace E(\tau,p_{\tau}^k,k), A(\tau,p_{\tau}^k,k)\right\rbrace ,
	\end{equation}
	where $E(\tau,p_{\tau}^k,k)$ and $A(\tau,p_{\tau}^k,k)$ are the expected utilities that can be obtained by the ST through prediction and transmission, respectively. We have
	\begin{equation}
	\label{utility_L}
	E(\tau,p_{\tau}^k,k) = \sum_{j = 1}^{L} \Pr\left[ O_{\tau + 1}^E = \mathcal{H}_j \right] V(\tau + 1,p_{\tau + 1}^k(\mathcal{H}_j),k),
	\end{equation}
	and
	\begin{equation}
	\label{utility_M}
	\begin{split}
	A(\tau,p_{\tau}^k,k) = & \sum_{\mathcal{J} \in \{\mathcal{A,N}\}} \Pr\left[ O_{\tau + \tau_s}^A = \mathcal{J} \right] V(\tau + \tau_s, p_{\tau + \tau_s}^k(\mathcal{J}),k) + r(p_{\tau}^k,1,k) \\
	= & \Pr \left[ O_{\tau + \tau_s}^A = \mathcal{A} \right] V(\tau + \tau_s,p_{\tau + \tau_s}^k(\mathcal{A}),k) + r(p_{\tau}^k,1,k).
	\end{split}
	\end{equation}
	
	\begin{lemma}
		\label{lemma_1}
		$V(\tau,p_{\tau}^k,k)$ is a convex function of $p_{\tau}^k$ for given $\tau$ and $k$.
	\end{lemma}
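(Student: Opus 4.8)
The plan is to prove Lemma~\ref{lemma_1} by backward induction on the time index $\tau$, using the recursive structure of the optimality equation \eqref{optimality equation} together with three elementary facts: convexity is preserved under (i) taking pointwise maxima of finitely many functions, (ii) adding an affine function, and (iii) the perspective transform $(x,t)\mapsto t\,f(x/t)$ for $t>0$ composed with affine maps. The key observation, exploited throughout, is that the Bayesian belief updates in \eqref{sensing}--\eqref{transmission} express the posterior $p_{\tau+\tau_0}^k$ as a ratio of affine functions of $p_\tau^k$, and the outcome probabilities that weight the continuation values in \eqref{utility_L}--\eqref{utility_M} are exactly the (strictly positive, affine-in-$p_\tau^k$) denominators of those ratios. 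Hence every weighted continuation value has precisely the form $t(p_\tau^k)\,V(\cdot,\,n(p_\tau^k)/t(p_\tau^k),\,k)$ with $n,t$ affine and $t>0$, i.e., a perspective of a (by induction) convex function composed with affine maps.

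Base case: at the truncation horizon $\tau=\mathcal{T}_k$ (the longest time the ST transmits in level $k$, so that the recursion only ever refers to strictly later times), $V(\mathcal{T}_k,p_{\mathcal{T}_k}^k,k)$ is identically zero, hence trivially convex. Inductive step: assume $V(\tau',\cdot,k)$ is convex in its belief argument for every $\tau'>\tau$. By \eqref{optimality equation}, $V(\tau,p_\tau^k,k)=\max\{E(\tau,p_\tau^k,k),A(\tau,p_\tau^k,k)\}$, so it suffices to show $E(\tau,\cdot,k)$ and $A(\tau,\cdot,k)$ are each convex in $p_\tau^k$, after which fact (i) closes the induction.

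For $A(\tau,\cdot,k)$ in \eqref{utility_M}: the immediate reward $r(p_\tau^k,1,k)$ from \eqref{ref_transmission} is affine in $p_\tau^k$. For the look-ahead term, \eqref{transmission} gives $\Pr[O^A_{\tau+\tau_s}=\mathcal{A}]=p_\tau^k g_\tau^A+(1-p_\tau^k g_\tau^A)\sum_{j}C_{kj}p_{kj}^{ACK}$, which is affine in $p_\tau^k$ and strictly positive, while the posterior $p_{\tau+\tau_s}^k(\mathcal{A})=p_\tau^k g_\tau^A/\Pr[O^A_{\tau+\tau_s}=\mathcal{A}]$ is the ratio of two affine functions of $p_\tau^k$. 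Thus
\[
\Pr[O^A_{\tau+\tau_s}=\mathcal{A}]\; V\!\left(\tau+\tau_s,\ \frac{p_\tau^k g_\tau^A}{\Pr[O^A_{\tau+\tau_s}=\mathcal{A}]},\ k\right)
\]
is a perspective of the convex function $V(\tau+\tau_s,\cdot,k)$ precomposed with affine maps, hence convex by fact (iii) and the inductive hypothesis; adding the affine reward preserves convexity by fact (ii), so $A(\tau,\cdot,k)$ is convex. The prediction term $E(\tau,\cdot,k)$ in \eqref{utility_L} is treated identically, term by term: from \eqref{sensing}, for each outcome $\mathcal{H}_j$ the probability $\Pr[O^E_{\tau+1}=\mathcal{H}_j]$ equals the corresponding affine, strictly positive denominator in \eqref{sensing}, and $p_{\tau+1}^k(\mathcal{H}_j)$ is its affine numerator divided by that denominator, so each summand $\Pr[O^E_{\tau+1}=\mathcal{H}_j]\,V(\tau+1,p_{\tau+1}^k(\mathcal{H}_j),k)$ is again a perspective of the convex $V(\tau+1,\cdot,k)$ and the sum over $j$ is convex. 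Therefore $V(\tau,\cdot,k)=\max\{E,A\}$ is convex, completing the induction.

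I expect the main obstacle to be the careful bookkeeping around the perspective structure rather than any deep step: one must (a) verify that the normalizing denominators in \eqref{sensing}--\eqref{transmission} are bounded away from zero on $p_\tau^k\in[0,1]$ — they are convex combinations of transition probabilities $C_{kj}$ and detection probabilities $H_{kj}$, which holds under the mild nondegeneracy assumptions already implicit in the model — and (b) confirm the induction is well founded, i.e., that the finite truncation horizon $\mathcal{T}_k$ is reached so the recursion in \eqref{utility_L}--\eqref{utility_M} only references times $\tau+1$ and $\tau+\tau_s$ with $\tau_s\geqslant 1$. Once the recursion is set up this way, the convexity propagation itself is purely mechanical.
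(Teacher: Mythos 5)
Your proposal is correct: the backward induction is well founded (the recursion in \eqref{utility_L}--\eqref{utility_M} only references times $\tau+1$ and $\tau+\tau_s$, and $V(\tau,\cdot,k)\equiv 0$ for all $\tau>\mathcal{T}_k$ by the definition of $\mathcal{T}_k$ in \eqref{T*} together with $\mathcal{T}_k<\infty$), the observation probabilities in \eqref{sensing} and \eqref{transmission} are exactly the affine normalizing denominators of the Bayes updates, so each weighted continuation term is a perspective of a convex function composed with affine maps, the immediate reward in \eqref{ref_transmission} is affine in $p_\tau^k$, and the pointwise maximum in \eqref{optimality equation} closes the induction. Note, however, that the paper does not actually supply a proof of Lemma~\ref{lemma_1}: it defers to Ross, where the usual argument is different in flavor --- for any \emph{fixed} (history-dependent) policy the expected total reward is linear in the belief $p_\tau^k$, and $V$ is the supremum of these linear functions over policies, hence convex, with no need for the horizon truncation or the perspective machinery. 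Your inductive route buys more: it delivers convexity of $E(\tau,\cdot,k)$ and $A(\tau,\cdot,k)$ separately (which the paper's Lemma~\ref{lemma_3} needs) and mirrors the backward-in-time structure used in the Appendix, at the cost of the bookkeeping you flag --- in particular you should state the base case as $V(\tau,\cdot,k)\equiv 0$ for all $\tau>\mathcal{T}_k$ (not only at $\tau=\mathcal{T}_k$), justify it by $r(p_\tau^k,1,k)\leqslant r(1,1,k)<0$ there so that never transmitting is optimal, and dispose of any observation whose normalizer vanishes by noting that such a term enters the sum with probability zero. With those small repairs your argument is a complete and valid substitute for the omitted proof.
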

	The proof of the above lemma can be easily obtained following \cite[page 58-59]{ross2014introduction}, and is omitted here. 
	
	It is clear that $V(\tau,p_{\tau}^k,k)$ is derived backward in time domain in \eqref{optimality equation}, \eqref{utility_L} and \eqref{utility_M}. Thus it will be helpful for the derivation of the optimal policy if an upper bound of $\tau$ can be established, which is given by
	\begin{equation}
	\label{T*}
	\mathcal{T}_k = \min\left\lbrace \tau': g_{\tau}^A < \frac{\sum_{j = 1}^{L} C_{kj} Y_j}{ \sum_{j = 1}^{L} C_{kj} Y_j + D_k }, \forall \tau > \tau' \right\rbrace.
	\end{equation}
	In each PT hypothesis with the $k$-th transmit power level, the transmission action will not be taken by the ST after $\mathcal{T}_k$. This is because, $\forall \tau > \mathcal{T}_k, r(1,1,k) < 0$, which means the ST will certainly receive negative reward if it transmits after $\mathcal{T}_k$, even if the PT is estimated as idle at time $\tau$. Therefore, $V(\tau,1,k) = 0, \forall \tau>\mathcal{T}_k$. When it comes to the range of $\mathcal{T}_k$, as $\tau_p$ follows a Poisson distribution in \eqref{F_X_tau}, we find that $\mathcal{T}_k < \infty$ always holds according to \cite{5062155}.
	
	\begin{lemma}
		\label{lemma_2}
		The optimal utility function $V({\tau},p_{\tau}^k,k)$ increases with $p_{\tau}^k$ for given $\tau$ and $k$.
	\end{lemma}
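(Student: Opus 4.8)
The plan is to deduce Lemma~\ref{lemma_2} directly from Lemma~\ref{lemma_1}, i.e.\ from the convexity of $V(\tau,\cdot,k)$, together with two elementary facts about the value function, so that no separate induction on the Bellman recursion \eqref{optimality equation}--\eqref{utility_M} is needed. The underlying observation is that a convex function on $[0,1]$ that is nonnegative and vanishes at the left endpoint is automatically non-decreasing.

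First I would record the two facts. (i) $V(\tau,p_\tau^k,k)\ge 0$ for every belief: the policy ``never transmit'' ($a_{\tau'}\equiv 0$) is admissible and, by \eqref{ref_transmission}, earns zero utility in every slot, so $V=\sup_\epsilon V_\epsilon\ge 0$. (ii) $V(\tau,0,k)=0$: when $p_\tau^k=0$ the ST knows the PT is no longer in level $k$, and since the PT cannot re-enter level $k$ within a hypothesis the posterior stays identically $0$ whatever is observed; every transmission then yields the non-positive expected reward $r(0,1,k)=-\tau_s\sum_j C_{kj}Y_j\le 0$ (from \eqref{ref_transmission}) while every prediction yields $0$, so not transmitting is optimal. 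Formally (ii) follows by a short backward induction on $\tau$: the base case $V(\tau,p,k)\equiv 0$ for $\tau>\mathcal{T}_k$ holds at every $p$ because $r(\cdot,1,k)$ is affine with nonnegative slope, so $r(p,1,k)\le r(1,1,k)<0$ there; and if $V(\tau',0,k)=0$ for all $\tau'>\tau$, then $E(\tau,0,k)=V(\tau+1,0,k)=0$ by \eqref{utility_L} and \eqref{sensing}, and $A(\tau,0,k)=r(0,1,k)\le 0$ by \eqref{utility_M} and \eqref{transmission}, whence $V(\tau,0,k)=0$. This is precisely the identity already used to drop the negative-ACK term in the second line of \eqref{utility_M}.

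With (i) and (ii) in hand the proof closes in a single step. Fix $\tau,k$ and let $0\le p<p'\le 1$. By Lemma~\ref{lemma_1}, $V(\tau,\cdot,k)$ is convex on $[0,1]$, so writing $p=\frac{p}{p'}p'+\bigl(1-\frac{p}{p'}\bigr)\cdot 0$ as a convex combination of $p'$ and $0$ gives
\begin{equation}
V(\tau,p,k)\ \le\ \frac{p}{p'}\,V(\tau,p',k)+\Bigl(1-\frac{p}{p'}\Bigr)V(\tau,0,k)\ =\ \frac{p}{p'}\,V(\tau,p',k)\ \le\ V(\tau,p',k),
\end{equation}
where the equality uses (ii) and the last inequality uses (i) together with $p/p'\le 1$. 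Hence $V(\tau,p_\tau^k,k)$ is non-decreasing in $p_\tau^k$ (and strictly increasing on the region where the value is positive), which is the claim.

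The only genuinely non-trivial ingredient is fact (ii); everything else is immediate once convexity is invoked, and this is the step I would be most careful about. If one prefers not to rely on the boundary value, an alternative is to argue the two branches of \eqref{optimality equation} separately: $A(\tau,\cdot,k)$ is non-decreasing because $r(\cdot,1,k)$ and $\Pr[O^A_{\tau+\tau_s}=\mathcal{A}]$ are affine and non-decreasing, the post-ACK belief $p_{\tau+\tau_s}^k(\mathcal{A})$ is non-decreasing in $p_\tau^k$, and $V(\tau+\tau_s,\cdot,k)\ge 0$ is non-decreasing by the induction hypothesis; while each summand of $E(\tau,\cdot,k)$ in \eqref{utility_L} is a perspective of the convex map $V(\tau+1,\cdot,k)$ evaluated at arguments affine in $p_\tau^k$, hence convex in $p_\tau^k$ and zero at $p_\tau^k=0$, so $E(\tau,\cdot,k)$ is a sum of non-decreasing functions. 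This route is correct but strictly longer than the direct use of Lemma~\ref{lemma_1}.
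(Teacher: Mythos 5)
Your proof is correct, but it takes a genuinely different route from the paper. The paper proves Lemma~\ref{lemma_2} by backward induction on $\tau$: starting from $V(\tau,p_{\tau}^k,k)=0$ for $\tau\geqslant\mathcal{T}_k$, it differentiates $E(\tau,p_{\tau}^k,k)$ and $A(\tau,p_{\tau}^k,k)$ with respect to $p_{\tau}^k$ and shows both derivatives are positive; to sign the terms $\mathcal{D}(k,j)=H_{jk}-\boldsymbol{c}_k\boldsymbol{h}_j$ and $\mathcal{F}=p_{\tau+1}^k(\mathcal{H}_k)-p_{\tau+1}^k(\mathcal{H}_j)$ it must invoke the extra structural assumptions $C_{jk}\approx 1/(L-1)$ and $H_{jj}\gg L H_{jk}$ (justified only numerically). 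Your argument bypasses all of this: you observe that the never-transmit policy gives $V\geqslant 0$, establish the boundary value $V(\tau,0,k)=0$ by a short backward induction (consistent with the paper's own remark that $E(\tau,0,k)=0$ and $A(\tau,0,k)<0$), and then conclude monotonicity from convexity (Lemma~\ref{lemma_1}) via the convex-combination inequality $V(\tau,p,k)\leqslant (p/p')V(\tau,p',k)$ — in fact this is exactly the trick the paper itself uses, but only at the horizon $\tau\geqslant\mathcal{T}_k$. What your route buys is brevity and generality: it needs no derivative computations and, notably, none of the heuristic assumptions on $\mathbf{C}$ and $\mathbf{H}$, so the lemma holds under weaker hypotheses. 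What the paper's longer induction buys is a by-product your main argument does not deliver: it establishes that $E(\tau,\cdot,k)$ and $A(\tau,\cdot,k)$ are \emph{individually} increasing in $p_{\tau}^k$, a fact the paper re-uses verbatim in Lemma~\ref{lemma_3} and which underpins the threshold structure of Theorem~\ref{theorem_1}. If your proof were adopted, that separate monotonicity would still need to be argued — which is essentially what your closing sketch (monotone affine reward, monotone posterior updates, nonnegative monotone continuation values, and the perspective-type convexity of each summand of $E$) supplies, so the overall development would remain intact.
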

	\begin{proof}
		The proof is given in Appendix \ref{Proof}.
	\end{proof}
	
	\begin{lemma}
		\label{lemma_3}
		$E(\tau,p_{\tau}^k,k)$ and $A(\tau,p_{\tau}^k,k)$ are convex functions increasing with $p_{\tau}^k$ for given ${\tau}$ and $k$.
	\end{lemma}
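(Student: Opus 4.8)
The plan is to fix $\tau$ and $k$ and study $E(\tau,\cdot,k)$ and $A(\tau,\cdot,k)$ as functions of the scalar belief $x:=p_\tau^k\in[0,1]$, bootstrapping off Lemmas~\ref{lemma_1} and~\ref{lemma_2}. First I would record three elementary structural facts read off \eqref{sensing}, \eqref{transmission} and \eqref{ref_transmission}. (i) Each observation probability $w_j(x):=\Pr[O^E_{\tau+1}=\mathcal{H}_j]$ and $w_{\mathcal{A}}(x):=\Pr[O^A_{\tau+\tau_s}=\mathcal{A}]$ is a nonnegative affine function of $x$, and each family sums to $1$. (ii) The product of such a probability with the associated posterior belief is exactly the numerator of the corresponding Bayes ratio, namely the prior factor $xg_\tau^E$ (resp.\ $xg_\tau^A$) times the observation likelihood under ``PT in level $k$'', hence a linear function of $x$. (iii) The reward $r(x,1,k)=\bigl[xg_\tau^A D_k-(1-xg_\tau^A)\sum_{j} C_{kj}Y_j\bigr]\tau_s$ is affine in $x$. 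A one-line computation with (i)--(ii) also shows that each posterior map $x\mapsto p_{\tau+1}^k(\mathcal{H}_j)$ and $x\mapsto p_{\tau+\tau_s}^k(\mathcal{A})$ is a fractional-linear function with nonnegative derivative on $[0,1]$, hence nondecreasing.

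For convexity I would invoke Lemma~\ref{lemma_1}, so that $V(\tau+1,\cdot,k)$ and $V(\tau+\tau_s,\cdot,k)$ are convex, and observe that every summand of \eqref{utility_L}, as well as the leading term of \eqref{utility_M}, has the shape $t(x)\,V\bigl(\,\cdot\,,u(x)/t(x),\,\cdot\,\bigr)$ with $t(\cdot)\ge0$ affine and $u(\cdot)$ linear by (i)--(ii); this is the perspective (positively homogeneous extension) of a convex function composed with the affine map $x\mapsto(u(x),t(x))$, hence convex, the degenerate case $t(x)=0$ (which here forces $u(x)=0$, since $|u(x)|\le t(x)$) being absorbed by the lower-semicontinuous closure. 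Summing and adding the affine term $r(x,1,k)$ yields convexity of $E$ and $A$. For the monotonicity of $A$ I would use Lemma~\ref{lemma_2} together with $V\ge0$ (refraining from transmitting is always available): in \eqref{utility_M} the weight $w_{\mathcal{A}}$ is affine with nonnegative slope, the factor $V(\tau+\tau_s,p_{\tau+\tau_s}^k(\mathcal{A}),k)$ is a nondecreasing posterior composed with a nondecreasing $V$ and both are nonnegative, so their product is nondecreasing; finally $r(x,1,k)$ has slope $g_\tau^A\bigl(D_k+\sum_j C_{kj}Y_j\bigr)\tau_s\ge0$ since the reward and penalty are nonnegative.

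The monotonicity of $E$ is where I expect the real work, because the individual weights $w_j$ in \eqref{utility_L} need not be monotone even though they sum to $1$. My plan is, for $x_1<x_2$, to write $v_j(x):=V(\tau+1,p_{\tau+1}^k(\mathcal{H}_j),k)$ and split $E(\tau,x_2,k)-E(\tau,x_1,k)=\sum_j w_j(x_2)\bigl(v_j(x_2)-v_j(x_1)\bigr)+\sum_j\bigl(w_j(x_2)-w_j(x_1)\bigr)v_j(x_1)$. The first sum is nonnegative because $w_j\ge0$ and each $v_j$ is nondecreasing (nondecreasing posterior, then Lemma~\ref{lemma_2}). For the second sum I would write $w_j(x)=q_j\bigl[xg_\tau^E\ell_j+(1-xg_\tau^E)\bigr]$, where $q_j$ is the likelihood of $\mathcal{H}_j$ under ``PT not in level $k$'' (so $\sum_j q_j=1$) and $\ell_j$ is the likelihood ratio of $\mathcal{H}_j$ for ``in level $k$'' versus ``not in level $k$'' (so $\sum_j q_j\ell_j=1$, whence $\sum_j q_j(\ell_j-1)=0$); then $w_j(x_2)-w_j(x_1)=(x_2-x_1)g_\tau^E q_j(\ell_j-1)$. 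Since $p_{\tau+1}^k(\mathcal{H}_j)=xg_\tau^E\ell_j/\bigl(xg_\tau^E\ell_j+1-xg_\tau^E\bigr)$ is nondecreasing in $\ell_j$ at fixed $x$, the value $v_j(x_1)$ is nondecreasing in $\ell_j$, so the sequences $(\ell_j-1)_j$ and $(v_j(x_1))_j$ are similarly ordered; a Chebyshev-sum (correlation) inequality with nonnegative weights $q_j$ and the centering $\sum_j q_j(\ell_j-1)=0$ then gives $\sum_j q_j(\ell_j-1)v_j(x_1)\ge0$, so the second sum is nonnegative as well, and $E$ is increasing.

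The crux — and the step I expect to be the main obstacle — is exactly this correlation inequality, i.e.\ pinning down that an observation which drives the posterior toward level $k$ (large $\ell_j$, hence large $v_j$) is simultaneously the one whose probability increases with the prior $x$, so the two effects reinforce rather than cancel; I would also take a little care with degenerate entries $q_j=0$ (there $p_{\tau+1}^k(\mathcal{H}_j)=1$, the maximal belief, pairing with the largest effective $\ell_j$, so that term is still nonnegative) and run everything as a backward induction on $\tau$ down from $\mathcal{T}_k$, with $V(\tau,\cdot,k)\equiv0$ for $\tau>\mathcal{T}_k$ as the base case. As a by-product, substituting the established properties into \eqref{optimality equation} shows $V(\tau,\cdot,k)=\max\{E,A\}$ is again convex and nondecreasing, which closes the induction and re-proves Lemmas~\ref{lemma_1} and~\ref{lemma_2} at time $\tau$.
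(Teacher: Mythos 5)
Your proposal is correct, but it proves the lemma by a genuinely different route than the paper. The paper's own proof of Lemma~\ref{lemma_3} does almost no new work on monotonicity: it simply inherits the fact that $E$ and $A$ increase in $p_\tau^k$ from the backward-induction proof of Lemma~\ref{lemma_2} in Appendix~\ref{Proof}, where the first derivative \eqref{deri_L} is signed term by term, and crucially the sign of the first term there is obtained only under extra modeling assumptions ($C_{jk}\approx 1/(L-1)$ and $H_{jj}\gg L H_{jk}$, justified numerically); convexity is then obtained by differentiating \eqref{utility_L} and \eqref{utility_M} twice and invoking Lemma~\ref{lemma_1}, which tacitly assumes $V$ is twice differentiable even though it is a pointwise maximum. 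Your argument replaces both steps: convexity follows from the perspective-of-a-convex-function composition (each summand is $w_j(x)\,V(\cdot,u_j(x)/w_j(x),\cdot)$ with $u_j$ linear and $w_j$ affine), which is the standard POMDP argument and needs no smoothness of $V$; and the monotonicity of $E$ follows from your likelihood-ratio decomposition $w_j(x)=q_j[xg_\tau^E\ell_j+(1-xg_\tau^E)]$ plus the weighted Chebyshev/correlation inequality, which exploits exactly the right structural fact (observations that raise the posterior toward level $k$ are the ones whose probability grows with the prior) and thereby dispenses with the paper's assumptions on $\mathbf{C}$ and $\mathbf{H}$; your treatment of the degenerate weights $q_j=0$ (posterior equal to $1$, value dominating the weighted average) and of $A$ via nonnegativity of $V$, monotone fractional-linear posteriors, and the affine reward with slope $g_\tau^A(D_k+\sum_j C_{kj}Y_j)\tau_s\ge 0$ is sound. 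The one organizational point to keep explicit is the non-circularity you already flag: since the paper's Lemma~\ref{lemma_2} is itself proved through the monotonicity of $E$ and $A$, your use of Lemmas~\ref{lemma_1} and~\ref{lemma_2} must be confined to the later time indices $\tau+1$ and $\tau+\tau_s$ inside a single backward induction from the finite horizon $\mathcal{T}_k$ (base case $V\equiv 0$ beyond $\mathcal{T}_k$), with \eqref{optimality equation} closing the loop at time $\tau$ — stated that way, your proof is not only valid but strengthens the paper's by removing its heuristic assumptions and differentiability gloss.
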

	\begin{proof}
		We have proved that $E(\tau,p_{\tau}^k,k)$ and $A(\tau,p_{\tau}^k,k)$ increase with $p_{\tau}^k$ in Lemma \ref{lemma_2}. Next, to prove their convexity, we derive their second order derivatives with regard to $p_{\tau}^k$. Combining \eqref{sensing} and \eqref{utility_L}, the second order derivative of $E(\tau,p_{\tau}^k,k)$ can be given by
		\begin{equation}
		\frac{dE^2(\tau,p_{\tau}^k,k)}{d^2 p_{\tau}^k} = \sum_{j = 1}^{L} \frac{\left( g_{\tau}^E H_{jk} \boldsymbol{c}_{k} \boldsymbol{h}_{j} \right)^2 V''(\tau+1,p_{\tau + 1}^k(\mathcal{H}_j),k)}{\left[  p_{\tau}^k g_{\tau}^E H_{jk} + (1 - p_{\tau}^k g_{\tau}^E) \boldsymbol{c}_{k} \boldsymbol{h}_{j} \right] ^3 },
		\end{equation}
		which is positive as $V({\tau},p_{\tau}^k,k)$ is convex. The second derivative of $A(\tau,p_{\tau}^k,k)$ can be proved positive similarly. Therefore, we complete the proof.
	\end{proof}
	
	We note that $E(\tau,0,k) = 0$ in \eqref{utility_L} and $A(\tau,0,k) = -\sum_{j = 1}^{L} C_{kj} Y_j \tau_s < 0$ in \eqref{utility_M}. Combining Lemma \ref{lemma_3}, we define the probability thresholds as
	\begin{equation}
	\label{p_tk}
	p_{\tau}^{k*} = \min\{ p_{\tau}^k:E(\tau,p_{\tau}^k,k) \leqslant A(\tau,p_{\tau}^k,k) \},
	\end{equation}
	and
	\begin{equation}
	p_{\tau}^{k**} = \max\{ p_{\tau}^k:E(\tau,p_{\tau}^k,k) \leqslant A(\tau,p_{\tau}^k,k) \}.
	\end{equation}
	Then, we can give the following optimized protocol, where $a_{\tau}^*$ is the optimal action at time $\tau$.
	\begin{theorem}
		\label{theorem_1}
		If $p_{\tau}^{k*} = p_{\tau}^{k**}$, we have
		\begin{equation}
		a_{\tau}^* = 
		\begin{cases}
		\label{case 1}
		0, & \mbox{$p_{\tau}^k \leqslant p_{\tau}^{k*}$},\\
		1, & \mbox{others},
		\end{cases}
		\end{equation}
		and if $p_{\tau}^{*} < p_{\tau}^{**}$, we have
		\begin{equation}
		a_{\tau}^* = 
		\begin{cases}
		\label{case 2}
		0, & \mbox{$p_{\tau}^k \leqslant p_{\tau}^{k*}$ or $p_{\tau}^k \geqslant p_{\tau}^{**}$ }, \\
		1, & \mbox{others}.
		\end{cases}
		\end{equation}
    \end{theorem}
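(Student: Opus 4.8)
The optimal action follows directly from the optimality equation \eqref{optimality equation}: because $V(\tau,p_{\tau}^k,k)=\max\{E(\tau,p_{\tau}^k,k),A(\tau,p_{\tau}^k,k)\}$, the branch attaining the maximum is optimal, so $a_{\tau}^*=1$ exactly when $A(\tau,p_{\tau}^k,k)>E(\tau,p_{\tau}^k,k)$ and $a_{\tau}^*=0$ otherwise, with a tie resolved towards prediction. Thus the theorem is equivalent to describing the set $\mathcal{S}=\{\,p\in[0,1]:A(\tau,p,k)\geqslant E(\tau,p,k)\,\}$ and showing that it is an interval whose endpoints are the thresholds $p_{\tau}^{k*}$, $p_{\tau}^{k**}$ from \eqref{p_tk}.

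The ingredients are already assembled. From \eqref{utility_L} with \eqref{sensing}, setting $p_{\tau}^k=0$ freezes the posterior at $0$, so unrolling \eqref{utility_L} up to $\mathcal{T}_k$ (past which $V\equiv 0$) yields $E(\tau,0,k)=0$; from \eqref{utility_M} and \eqref{ref_transmission}, $A(\tau,0,k)=-\big(\sum_{j=1}^{L}C_{kj}Y_j\big)\tau_s<0$, as noted just before the theorem. Hence $0\notin\mathcal{S}$, so the left end of $\mathcal{S}$ (when nonempty) is strictly positive. Lemma \ref{lemma_3} gives that $E(\tau,\cdot,k)$ and $A(\tau,\cdot,k)$ are convex and nondecreasing on $[0,1]$, and Lemma \ref{lemma_2} gives the same monotonicity for $V$ at every horizon.

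The crux is to show that $\mathcal{S}$ is connected, i.e.\ that $\Delta(p):=A(\tau,p,k)-E(\tau,p,k)$, which begins at $\Delta(0)<0$, changes sign at most once from negative to nonnegative and at most once back, so that $\{\Delta\geqslant 0\}$ is a single sub-interval of $(0,1]$. Convexity and monotonicity of $E$ and $A$ on their own do not force this, since their difference is a difference of convex functions and can in principle oscillate in sign, so the finer structure has to be used. I would argue by backward induction on $\tau$ anchored at $\mathcal{T}_k$: for $\tau$ with $\tau+\tau_s>\mathcal{T}_k$ the term $V(\tau+\tau_s,\cdot,k)$ vanishes, so $A(\tau,\cdot,k)=r(\cdot,1,k)$ is affine while $E(\tau,\cdot,k)\equiv 0$, and $\mathcal{S}$ is visibly an interval; the inductive step then propagates connectedness through the Bayesian updates \eqref{sensing} and \eqref{transmission}, using that the already-treated value functions $V(\tau+1,\cdot,k)$ and $V(\tau+\tau_s,\cdot,k)$ are convex (Lemma \ref{lemma_1}) and increasing (Lemma \ref{lemma_2}), that the observation probabilities and the reward term $r(p,1,k)$ are affine in $p$, and that each value contribution has the perspective form exploited in the proof of Lemma \ref{lemma_3}. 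An equivalent route combines the one-sided bound $E(\tau,p,k)\leqslant p\,E(\tau,1,k)$ (from convexity and $E(\tau,0,k)=0$) with the fact that $A(\tau,\cdot,k)$ lies above each of its supporting lines, together with $A(\tau,0,k)<0$ and monotonicity, to confine the sign changes of $\Delta$.

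Granting connectedness, $\mathcal{S}=[p_{\tau}^{k*},p_{\tau}^{k**}]\subseteq(0,1]$ with endpoints as in \eqref{p_tk}, and the rule ``$a_{\tau}^*=0$ off $\mathcal{S}$, $a_{\tau}^*=1$ on its interior, ties to prediction'' is exactly the claimed one. The dichotomy in the statement records whether the transmit-optimal set has a single active threshold ($p_{\tau}^{k*}=p_{\tau}^{k**}$), giving \eqref{case 1}, or two distinct ones ($p_{\tau}^{k*}<p_{\tau}^{k**}$), giving \eqref{case 2}; Lemma \ref{lemma_2} ensures $\Delta<0$ strictly outside $[p_{\tau}^{k*},p_{\tau}^{k**}]$, so the two cases cover everything. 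The main obstacle is the connectedness step above, and, relatedly, keeping clean track of the endpoint $p=1$, which is what distinguishes case 1 from case 2.
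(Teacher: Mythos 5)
Your first step --- $a_{\tau}^*=1$ exactly when $A(\tau,p_{\tau}^k,k)\geqslant E(\tau,p_{\tau}^k,k)$ by the optimality equation \eqref{optimality equation}, so that the theorem reduces to the set $\mathcal{S}=\{p:A(\tau,p,k)\geqslant E(\tau,p,k)\}$ being an interval whose endpoints are those of \eqref{p_tk} --- is exactly how the paper argues, and with the same ingredients: $E(\tau,0,k)=0$, $A(\tau,0,k)=-\sum_{j=1}^{L}C_{kj}Y_j\,\tau_s<0$, and the convexity and monotonicity of Lemma \ref{lemma_3}. In fact the paper offers nothing beyond this: its ``proof'' is the paragraph following the theorem, which reads the policy off Fig.~\ref{illustration} ``following Lemma \ref{lemma_3}.'' So up to the point where the paper stops, you have reproduced its argument, and your observation that Lemma \ref{lemma_3} alone does not force $\mathcal{S}$ to be connected (a difference of two convex increasing functions can change sign more than twice) is an accurate diagnosis of precisely what the paper leaves implicit.

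However, your attempt to close that hole does not succeed as written. The backward induction is only announced --- ``propagates connectedness through the Bayesian updates'' --- with no argument supplied for the inductive step, and the base case is off: for $\mathcal{T}_k-\tau_s<\tau\leqslant\mathcal{T}_k$ you do get $A(\tau,\cdot,k)=r(\cdot,1,k)$, but $E(\tau,\cdot,k)$ is not identically zero, because $V(\tau+1,\cdot,k)$ can still be positive (transmission is excluded only for times after $\mathcal{T}_k$); the clean anchor is $\tau>\mathcal{T}_k$, where $V\equiv 0$. The alternative bound $E(\tau,p,k)\leqslant p\,E(\tau,1,k)$ likewise does not limit the number of sign changes of $A-E$. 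So, as a standalone proof, your proposal is incomplete exactly at the step you yourself flag as the crux; a rigorous single-crossing argument would most likely need the same structural assumptions the paper already invokes in the appendix to prove Lemma \ref{lemma_2} ($H_{jj}\gg H_{jk}$ and $C_{jk}\approx 1/(L-1)$), which neither you nor the paper brings to bear here. One further point worth recording: under the literal min/max definitions in \eqref{p_tk}, the case $p_{\tau}^{k*}=p_{\tau}^{k**}$ would make $\mathcal{S}$ a singleton, and then ``transmit for all $p_{\tau}^k>p_{\tau}^{k*}$'' is not what optimality gives; the intended reading of Fig.~\ref{illustration}(a) is that the upper crossing is absent (effectively $p_{\tau}^{k**}=1$), which is the endpoint bookkeeping you noticed but did not resolve.
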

	The two cases are illustrated in Fig. \ref{illustration}. When $p_{\tau}^{k**}=p_{\tau}^{k*}$, $E(\tau,p_{\tau}^k,k)$ and $A(\tau,p_{\tau}^k,k)$ can be illustrated in Fig. \ref{illustration} (a) following Lemma \ref{lemma_3}. In this case, when $p_{\tau}^k \leqslant p_{\tau}^{k*}$, we get $E(\tau,p_{\tau}^k,k) \geqslant (\tau,p_{\tau}^k,k)$. The ST will choose the prediction action for larger expected utility ($a_{\tau}^*=0$). When $p_{\tau}^k>p_{\tau}^{k*}$, we get $a_{\tau}^*=1$. Note that the ST will choose the next action at the end of each transmission or prediction period. Fig. \ref{illustration} (b) follows a similar interpretation.
	\begin{figure}
		\centering
		\includegraphics[width=0.9\textwidth]{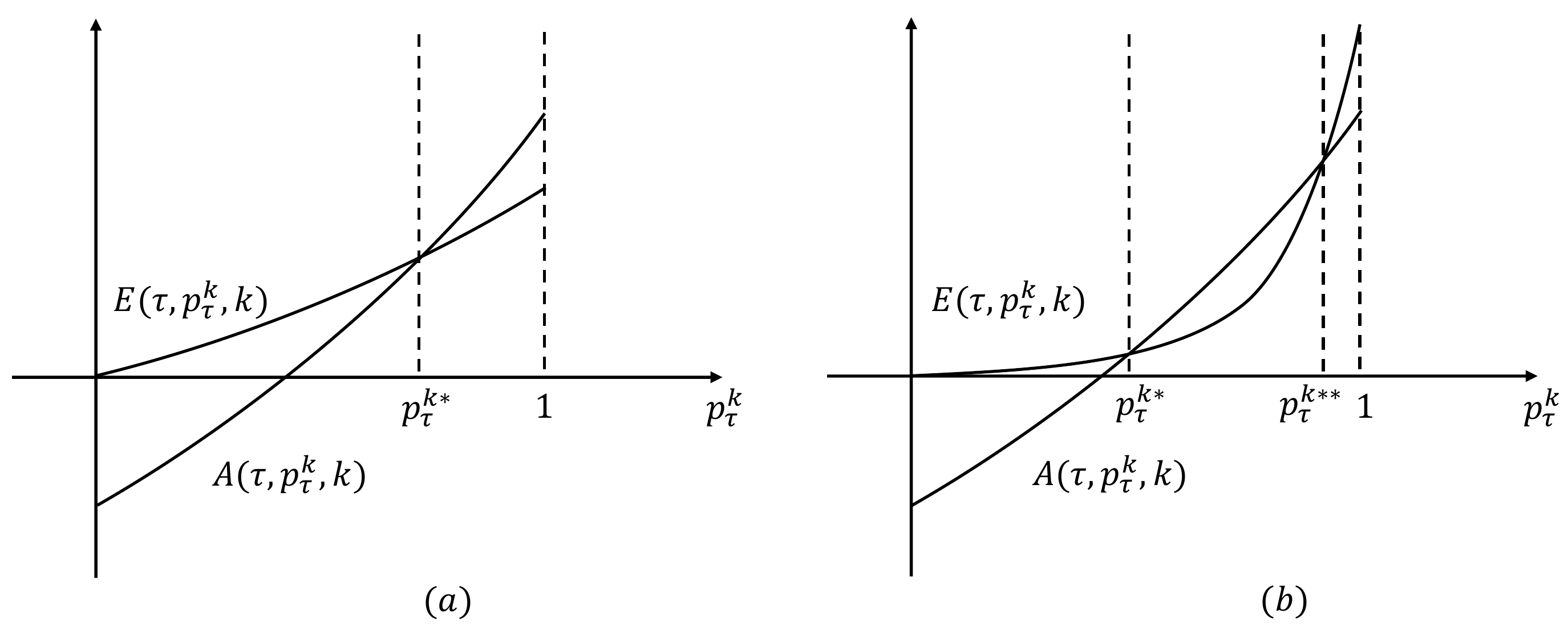}
		\caption{The illustration of $A(\tau,p_{\tau}^k,k)$ and $E(\tau,p_{\tau}^k,k)$ versus $p_{\tau}^k$, for ($a$) $p_{\tau}^{k*} = p_{\tau}^{k**}$ and ($b$) $p_{\tau}^{k*} < p_{\tau}^{k**}$.}
		\label{illustration}
	\end{figure}
	
	\subsection{Prediction-Transmission in an Online Scenario}
	In Stage II, our proposed prediction-transmission structure can dynamically adjust the intervals between two sensing slots to increase the secondary network throughput. We note that this structure only works when the number of PT transmit power levels $L$ remains unchanged. This is because the prediction-transmission structure is based on an one-off learning period in Stage I, and the learning results will become obsolete if $L$ changes after Stage I. However, this change is likely to happen in the long run as a consequence of the PT adapting to the environment fluctuation or QoS variation. This will render the proposed prediction-transmission structure less effective. To rectify this situation, we take advantage of the fact that the ST still receives the PT signals from time to time, reflecting the already changed PT power levels. We will explore these signals, and extend the spectrum access method to an online scenario to accommodate this highly dynamic PT power level characteristics.
	
	In the online prediction-transmission structure, the ST not only identifies the PT power level in the sensing slot, but also stores the received PT signals ($X_n^*$). Then, it updates its observation set, starting from $\textbf{X}$ from Stage I, through replacing the old signals with newly-received ones, while the size of $\textbf{X}$ keeps the same. For example, after receiving the first signal $X_1^*$ in Stage II, $\textbf{X}$ will be updated as
	\begin{equation}
	\boldsymbol{X} = \{ X_2,...,X_{N},X_1^* \}.
	\end{equation}
	To make online update of $\{ \boldsymbol{\theta},\boldsymbol{\pi},L \}$, we can revise Algorithm \ref{Gibbs Sampling algorithm} by adding the following algorithm flow before line \ref{dddd}.
	\begin{algorithm}
		\caption{Algorithm Flow}  
		\begin{algorithmic}
			\FOR {$n = 2,\cdots,N$}
			\STATE Let $X_{n-1} = X_n$, $z_{n - 1} = z_n$.
			\ENDFOR
			\STATE Let $X_N = X_1^*$.
			\STATE Update the indicator $z_n$ following \eqref{post}.
		\end{algorithmic}  
	\end{algorithm}
	
	\section{Numerical Results}
	\label{Section Numerial Results}
	In this section, numerical results are provided to illustrate the advantages of the proposed two-stage spectrum sharing strategy. We first evaluate the performance of the prediction part in Stage II, in terms of different learning methods in Stage I including both parametric and nonparametric. Then, we numerically verify the theoretical results of the optimal thresholds in Theorem \ref{theorem_1} in the prediction-transmission structure. Finally, we illustrate the superiority of the proposed spectrum sharing strategy on an overall basis in comparison with other methods, which are different from the proposed one in Stage I or II. In addition, we demonstrate the advantages of the proposed online prediction-transmission structure in handling the highly dynamic scenario where the number of PT power levels changes after Stage I.
	
	In the simulation, we set the power level $L=4$ and the probability of each hypothesis $\Pr(\mathcal{H}_l) = 0.25$. It is assumed that the noise variance $\sigma_u^2 = 1$, the PT transmit powers $P_{PT,1}:P_{PT,2}:P_{PT,3} = 1:2:3$, and $P_{PT,4} = 0$. Recall that $\gamma_{st}^l = P_{PT,l}/\sigma_u^2$ is the received SNR at the ST, thus we have $\gamma_{st}^1:\gamma_{st}^2:\gamma_{st}^3 = 1:2:3$ and $\gamma_{st}^4 = 0$. The average SNR at the ST is defined as $\gamma_{st} = (1/L)\sum_{l = 1}^{L} \gamma_{st}^l$. In addition, we set the time duration of a sensing slot $T_{ss}= 2$ ms, the reward $D_k = 1$, and penalty $Y_l = 1$ in \eqref{ref_transmission}. 
	
	\subsection{Performance of the PT Power Level Identification}
	For our proposed strategy, in the prediction part of Stage II, the ST can easily identify the current PT power level based on the inferred $\{ \boldsymbol{\theta},\boldsymbol{\pi},L \}$ in Stage I and $X_n$, $n > N$. Note that in addition to the proposed conditionally conjugate DPGMM, there are other learning methods to obtain this parameter set, such as expectation maximization GMM (EMGMM) \cite{bilmes1998gentle}, mean shift \cite{1000236}, kernel density estimation (KDE), and density based spatial clustering of application with noise (DBSCAN). Among them, the EMGMM is a parametric clustering method requiring the prior knowledge of $L$, while the others belong to a nonparametric class without the need for such prior knowledge. Taking into account the multiple power levels, we define the probability of correct PT power level prediction as
	\begin{equation}
	P_c = \sum_{l = 1}^{L} \Pr \{\mathcal{\hat{H}}_{l}|\mathcal{H}_{l}\} \Pr \{ \mathcal{H}_l \}.
	\end{equation}
	
	\begin{figure*}[ht]
		\centering
		\subfigure[$N^{(s)} = 10^4$.]{
			\includegraphics[width=0.45\textwidth]{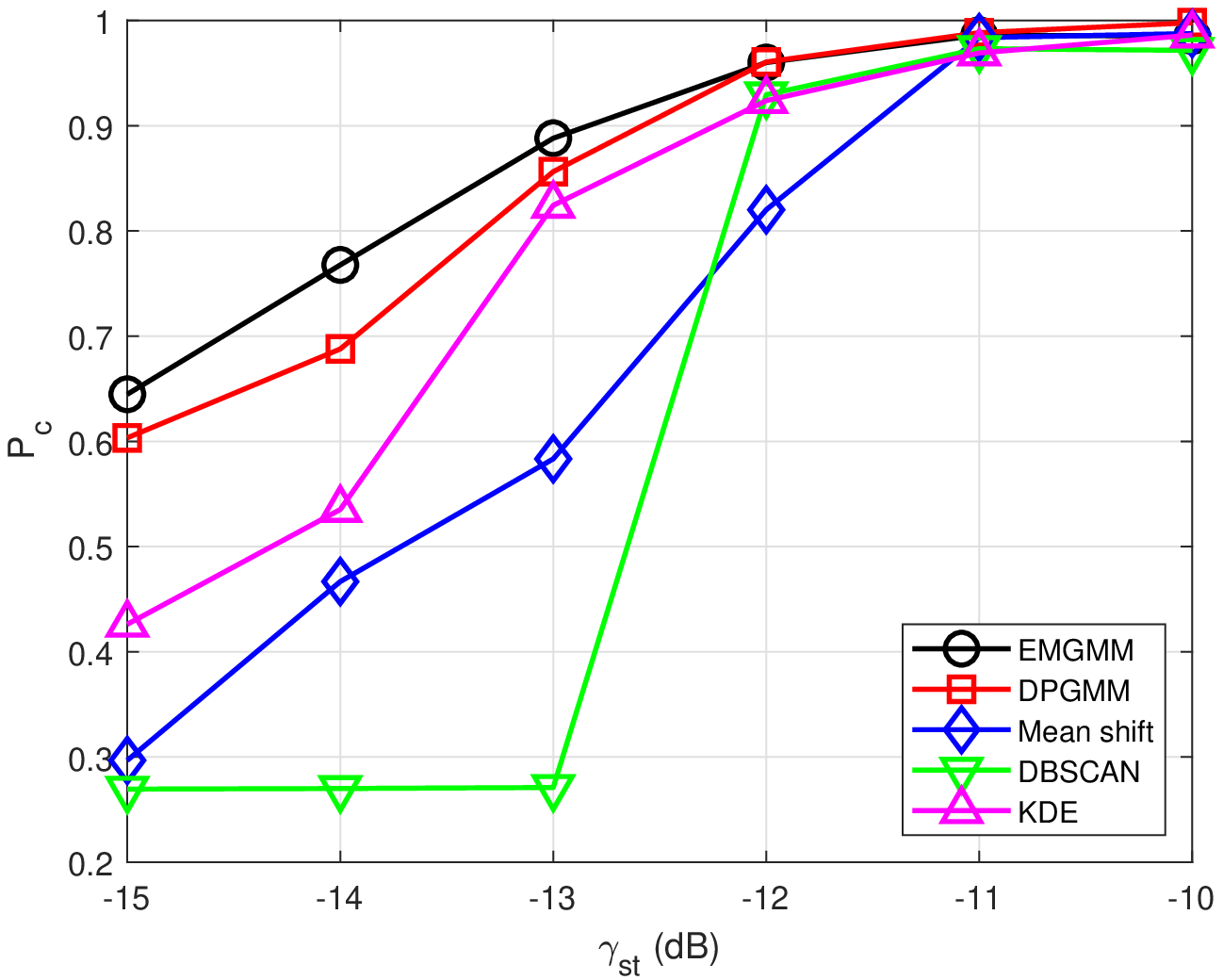}
			\label{cluster accuracy}
		}
		\subfigure[$\gamma_{st} = -12$ dB.]{
			\includegraphics[width=0.45\textwidth]{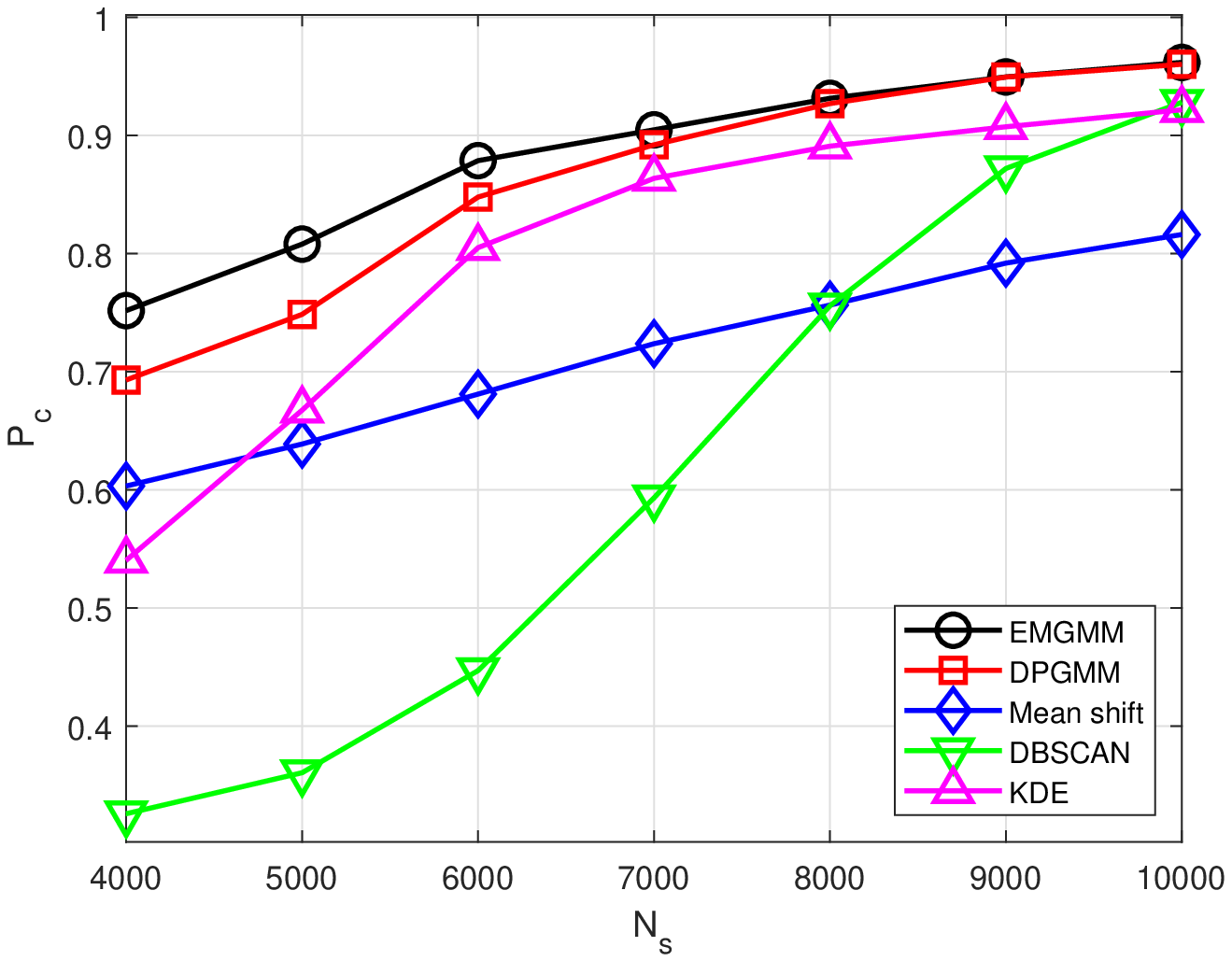}
			\label{power estimation}
		}
		\caption{The probability of correct PT power level prediction in Stage II ($P_c$) versus $\gamma_{st}$ and $N^{(s)}$.}
		\label{combined picutre 1}
	\end{figure*}
	
	In Fig. \ref{combined picutre 1}, we illustrate $P_c$ for five different learning methods, with respect to different $\gamma_{st}$ (Fig. \ref{cluster accuracy}) and $N^{(s)}$ (Fig. \ref{power estimation}). It is shown in Fig. \ref{cluster accuracy} that, in general, $P_c$ increases with $\gamma_{st}$ for all methods. This is because the gap between the adjacent transmit powers increases with $\gamma_{st}$, rendering them more distinguishable from the perspective of machine learning. In Fig. \ref{power estimation}, $P_c$ improves with increasing $N^{(s)}$ for all methods, because a larger $N^{(s)}$ results in a smaller variance of each Gaussian distribution in the mixture model. In Fig. \ref{combined picutre 1}, without the prior knowledge of $L$, the proposed DPGMM significantly outperforms mean shift and DBSCAN, particularly for small $\gamma_{st}$. The DPGMM also achieves larger $P_c$ than KDE. Meanwhile, despite the lack of the prior knowledge of $L$, DPGMM is only slightly inferior to EMGMM, and the gap becomes negligible for increased $\gamma_{st}$ or $N^{(s)}$.
	
	\subsection{Theoretical Verification of the Prediction-Transmission Structure}
	In this subsection, we first illustrate the optimized thresholds through numerical search. Note that only the case in Fig. \ref{illustration} (a) will occur in our scenario. Then, we compare the maximized utility $V(0,1,k)$ in \eqref{V_0} with numerical results. 
	
	\begin{figure*}[ht]
		\centering
		\subfigure[The optimized threshold $p_{\tau}^{k*}$ versus $\tau$ with $\tau_s = 4$.]{
			\includegraphics[width=0.45\textwidth]{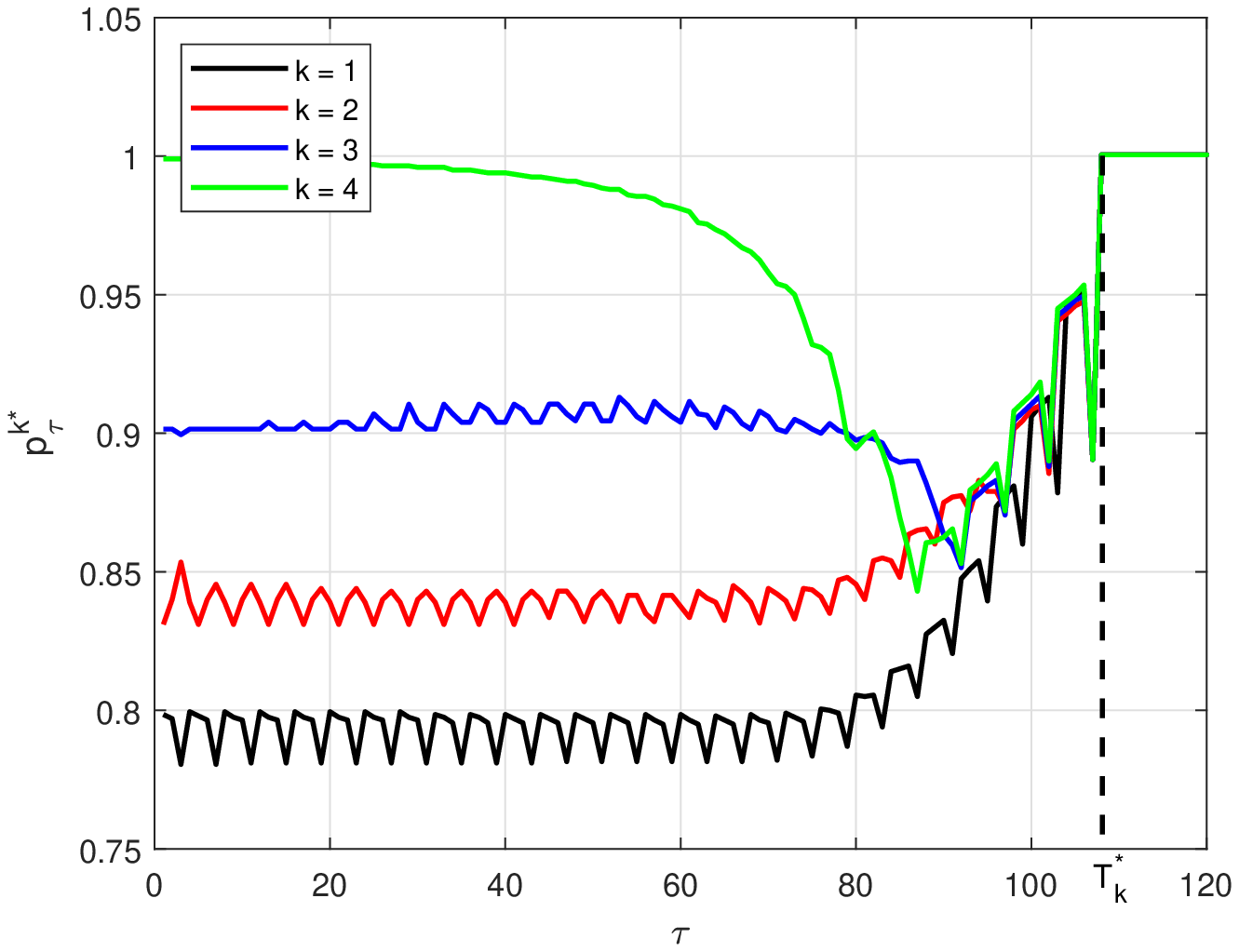}
			\label{optimized p l}
		}
		\subfigure[$V(0,1,k)$ of theoretical and numerical results versus $\tau_s$.]{
			\includegraphics[width=0.45\textwidth]{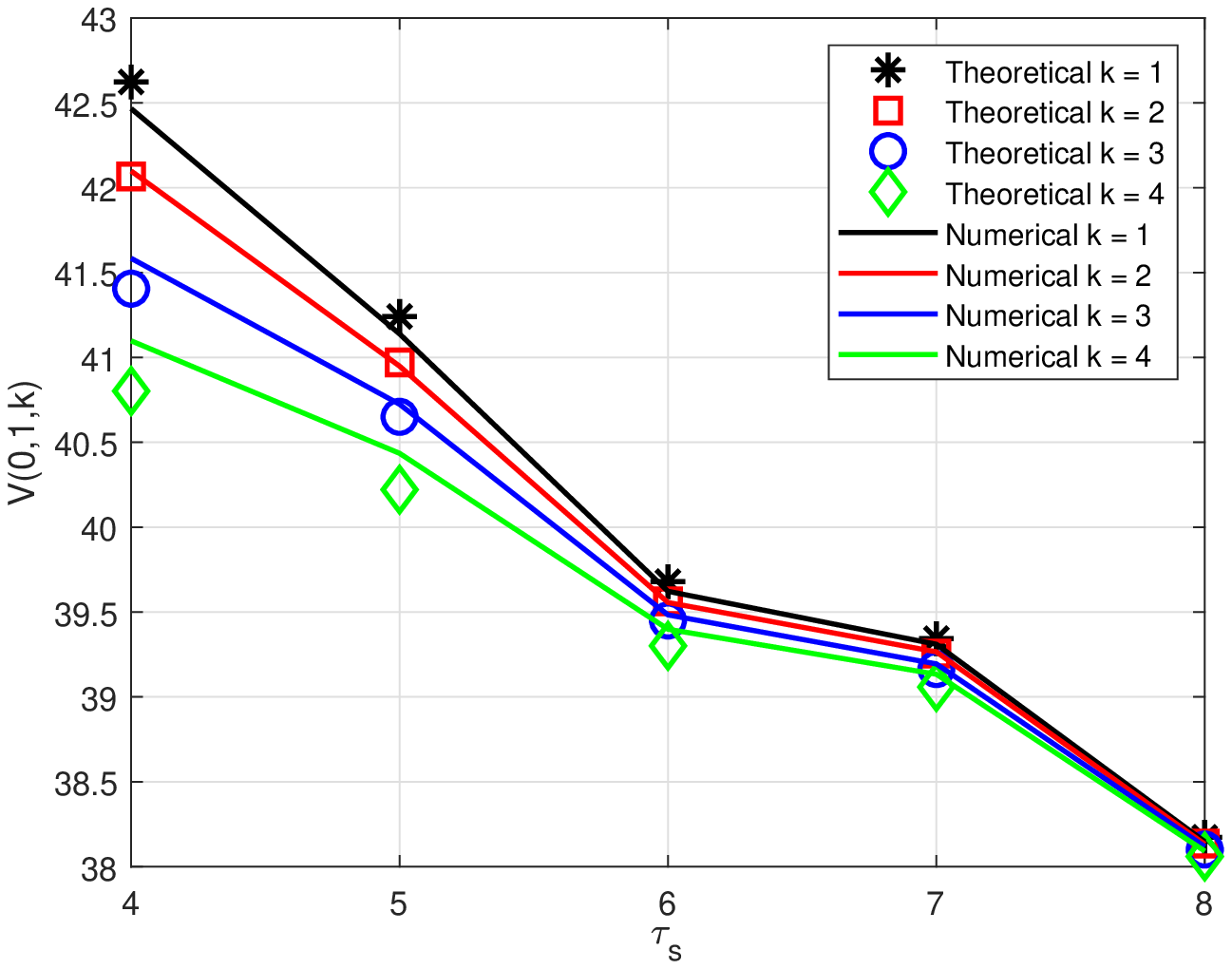}
			\label{optimized p l_mix}
		}
		\caption{Theoretical verification of the prediction-transmission structure with $f_s = 5$ MHz, $N^{(s)} = 10^4$, $\gamma_{st} = -12$ dB, and $\nu = 100$.}
		\label{combinded picture 2}
	\end{figure*}
	
	In Fig. \ref{optimized p l}, we illustrate $p_{\tau}^{k*}$ in \eqref{case 1} for different PT transmit power levels identified by the ST. It is shown that $p_{\tau}^{k*}$ is time varying, and is larger for higher PT transmit power level, which indicates that the ST favors a prediction action for a higher transmit power level $k$. However, it is seen that $p_{\tau}^{k*}$ becomes 1 when the time exceeds $\mathcal{T}_k$, which means that the ST will not transmit after $\mathcal{T}_k$ regardless of the value of $p_{\tau}^{k}$. This observation is consistent with \eqref{T*}. In Fig. \ref{optimized p l_mix}, we compare the maximized utility $V(0,p_{\tau}^{k*},k)$ in \eqref{V_0} for different $k$. It is observed that the numerical results match the theoretical ones well, which verifies the effectiveness of the proposed reinforcement learning method. It is seen that $V(0,1,k)$ decreases with $\tau_s$ for all $k$. This is because the increase in $\tau_s$ renders the prediction of $l$ in the next transmission block less accurate. Thus, the ST is more likely to carry out prediction when the current PT power level duration is close to the end. Naturally, this decreases the utility.
	
	\subsection{Advantages of the Conditionally Conjugate DPGMM in Stage I}
	\label{First Stage}
	In this subsection, we illustrate the impact of different learning methods in Stage I on the NPLA performance of the proposed spectrum sharing strategy. The NPLA performance is defined as
	\begin{equation}
	U(\tau) =  \dfrac{\tau_s}{\tau} \sum_{\tau_0 = 0}^{ \tau } a_{\tau_0} \psi( \sum_{\tau_1 = 0}^{\tau_s - 1} | k_{\tau_0 + \tau_1} - l_{\tau_0 + \tau_1} |),	\end{equation}
	where $k_{\tau}$ and $l_{\tau}$ denote the ST and PT transmit power levels at time $\tau$, respectively. We have $\psi(x) = 1$ when $x = 0$, and $\psi(x) = 0$ otherwise. Basically, a larger $U(\tau)$ leads to a better tradeoff between the secondary network throughput and the interference to the primary network.
	
	\begin{figure*}[ht]
		\centering
		\subfigure[Periodic structure for the prediction-transmission structure.]{
			\includegraphics[width=0.45\textwidth]{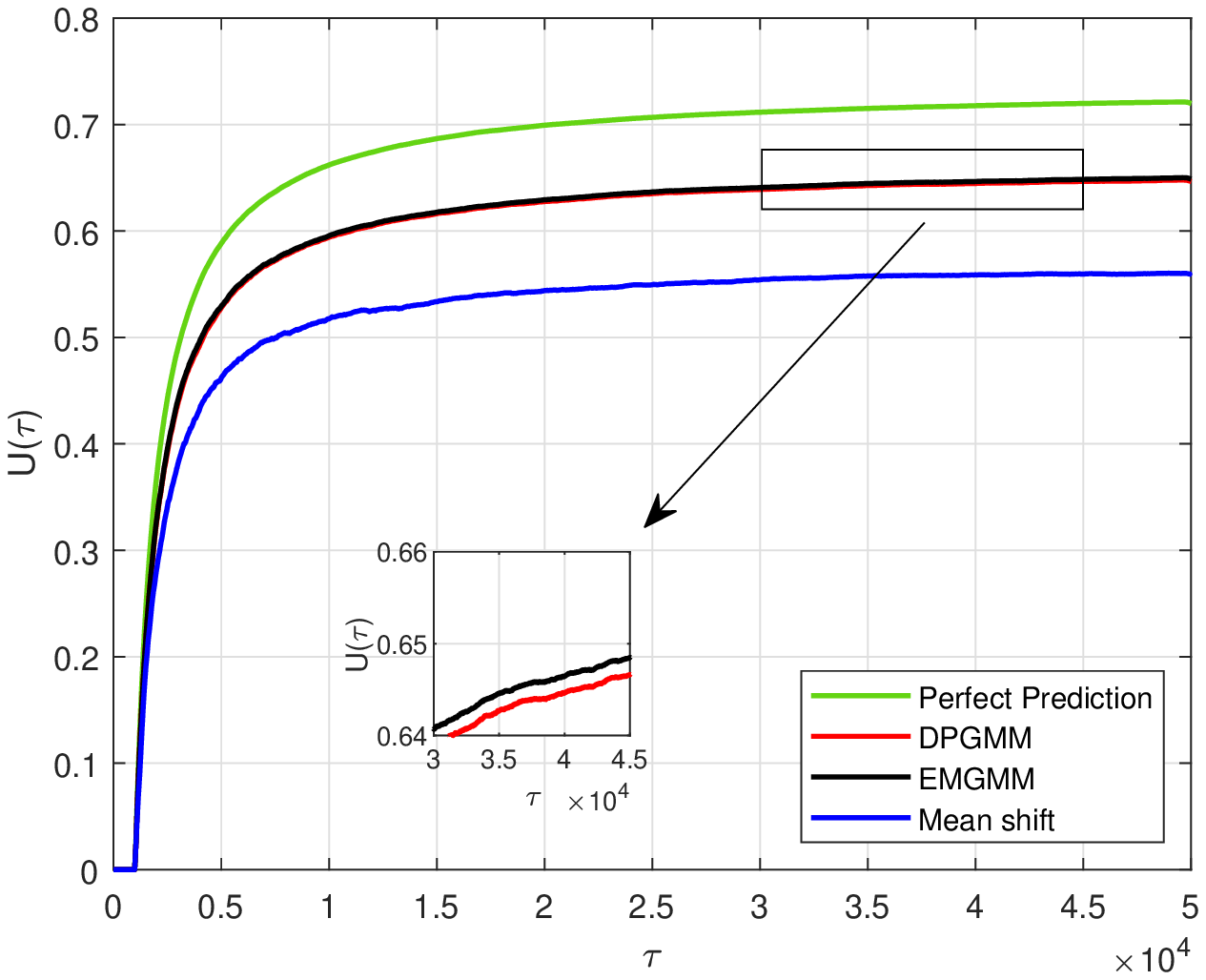}
			\label{Utility_Step_1_1}
		}
		\subfigure[Non-periodic structure for the prediction-transmission structure.]{
			\includegraphics[width=0.45\textwidth]{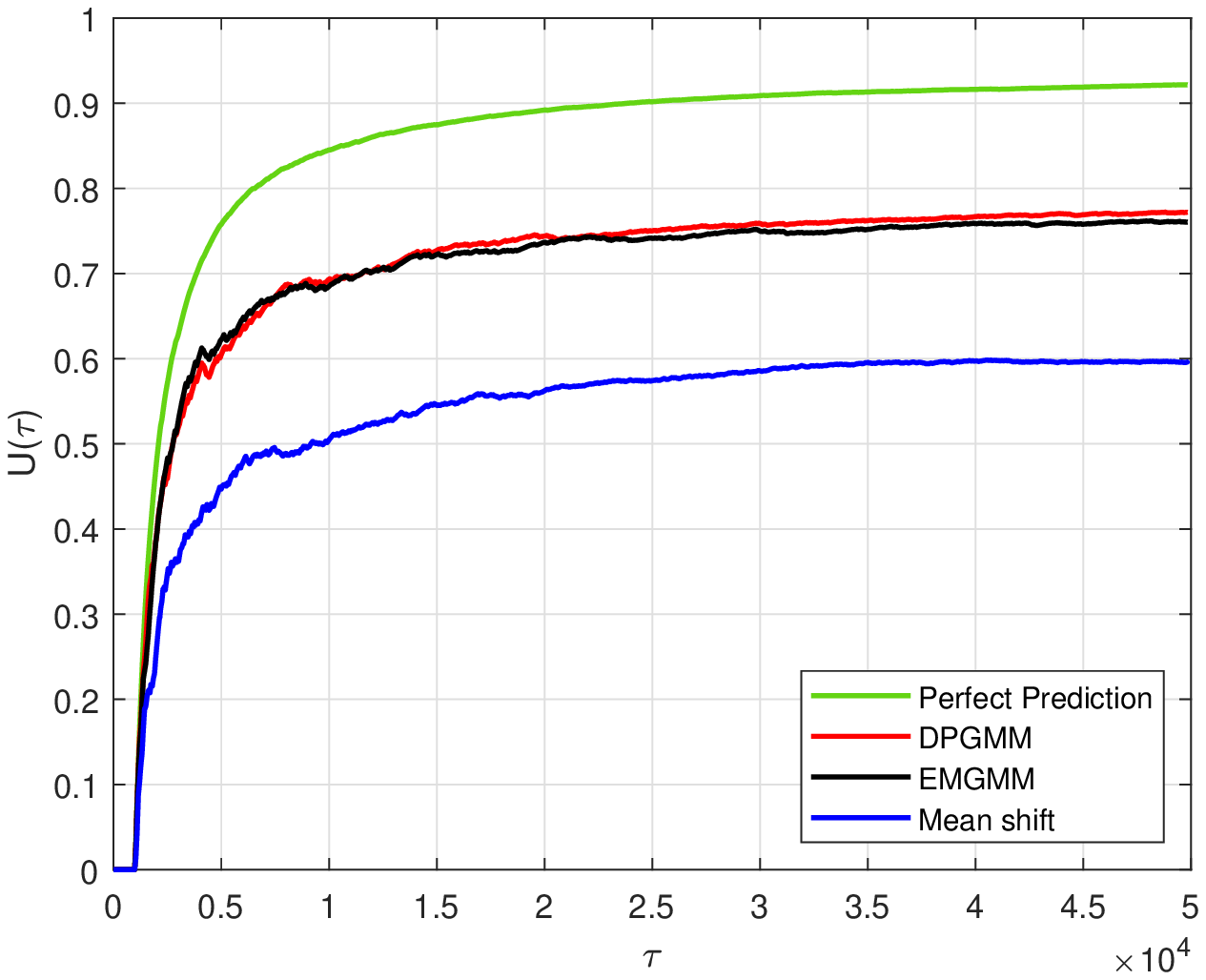}
			\label{Utility_Step_1}
		}
		\caption{The NPLA performance $U(\tau)$ for the spectrum sharing strategies using different clustering methods in Stage I with $f_s = 2.5$ MHz, $N^{(s)} = 5\times 10^3$, $\tau_s = 4$, $\gamma_{st} = -12$ dB, and $\nu = 50$.}
		\label{first_stage_both}
	\end{figure*}
	
	In Fig. \ref{first_stage_both}, we compare $U(\tau)$ for different learning algorithms, considering both periodic (Fig. \ref{Utility_Step_1_1}) and non-periodic structures (Fig. \ref{Utility_Step_1}) in the prediction part of Stage II. In addition to the three learning methods in Fig. \ref{combined picutre 1} (conditionally conjugate DPGMM, EMGMM and mean shift), we include for reference an upper bound with perfect prediction. The same $N$, $T_{ss}$ and $\tau_s $ are assumed for all the methods.
	
	In Fig. \ref{Utility_Step_1_1}, it is shown that $U(\tau)$ remains 0 when $\tau \leqslant 10^3$ due to the learning period (i.e., no transmission), and begins to increase when the ST goes into Stage II ($\tau > 10^3$). $U(\tau)$ refers to the average utility from the beginning of Stage I to time $\tau$, will keep stable in the long term. Thus it first increases significantly and asymptotically approach the constant after certain time point, which is $5 \times 10^3$ shown in the figure. With reference to Fig. \ref{power estimation}, we can see that the learning method with better $P_c$ corresponds to higher $U(\tau)$. This clearly collaborates the significance of the prediction accuracy. Similar conclusions can be drawn in Fig. \ref{Utility_Step_1} for the non-periodic structure.
		
	\subsection{Advantages of the Prediction-Transmission Structure in Stage II}
	In Fig. \ref{Second_stage}, we compare $U(\tau)$ of the proposed spectrum sharing strategy with periodic and non-periodic structures in Stage II for different $\tau$ and $\tau_s$. Note that conditionally conjugate DPGMM is used in Stage I for all the cases. As an upper bound, we include a perfect system where the ST can always accurately track the PT power level variation. 
	
	Similar to Fig. \ref{first_stage_both}, it is shown in Fig. \ref{Second_stage_1} that $U(\tau)$ of three different structures remains 0 when $\tau \leqslant 10^3$, and approaches certain positive value when $\tau > 10^3$. This comes from a similar reason explained in Section \ref{First Stage}. Fig. \ref{Second_stage_1} shows that the non-periodic structure outperforms the periodic one, which verifies the benefit of dynamically adjusting the prediction~intervals.
	
	Fig. \ref{Second_stage_2} shows that $U(\tau)$ of three different structures versus $\tau_s$ when $\tau = 5 \times 10^4$. We find that when $\tau_s$ increases, $U(\tau)$ of the periodic structure increases. This is because a larger $\tau_s$ results in a smaller time proportion of prediction, and it will benefit the periodic strategy. By contrast, for non-periodic structure and the perfect system, $U(\tau)$ will increase with a smaller $\tau_s$. This is because for a smaller $\tau_s$, the ST is more flexible in transmission block allocation.
	
	\begin{figure*}[ht]
		\centering
		\subfigure[$\tau_s = 2$.]{
			\includegraphics[width=0.45\textwidth]{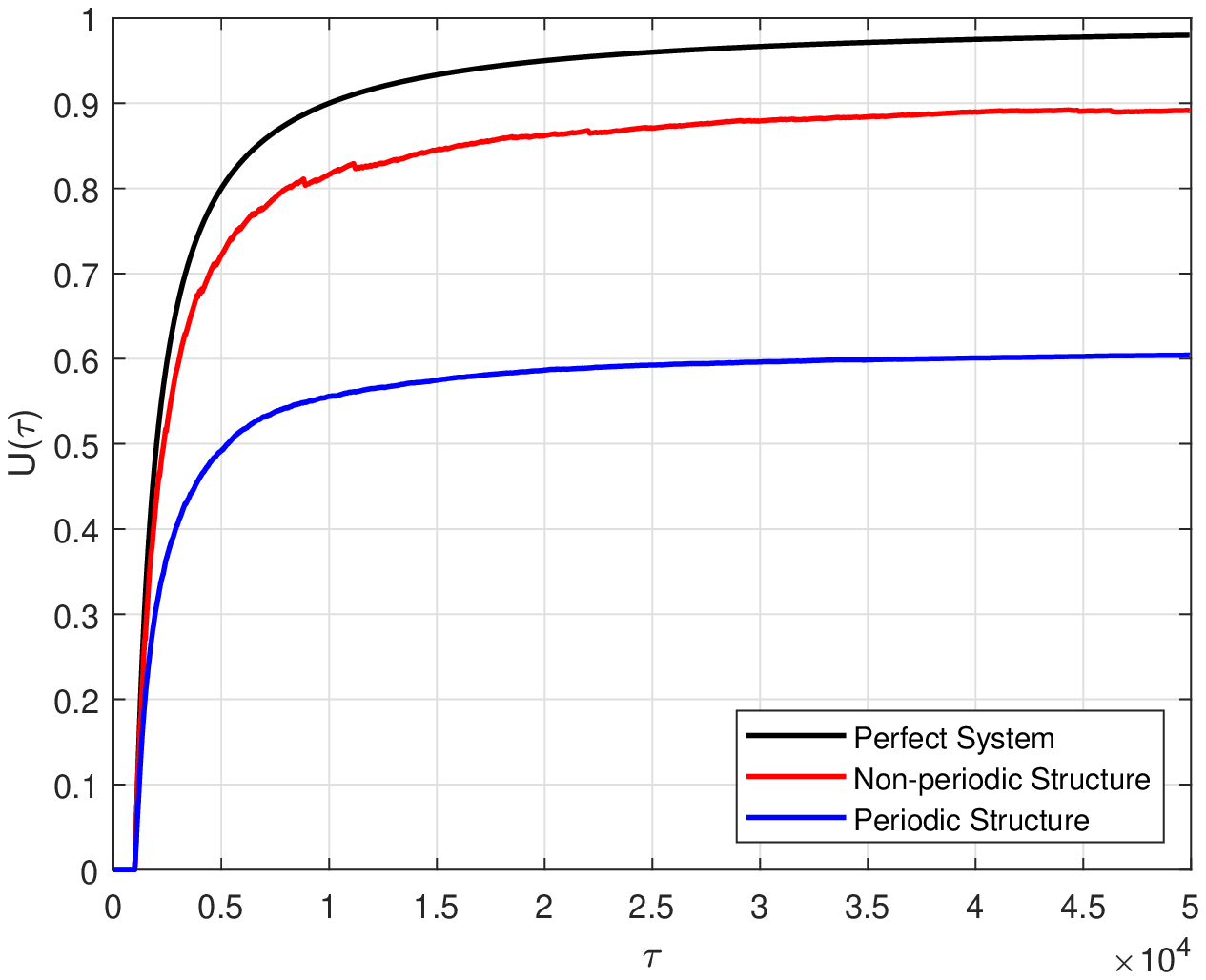}
			\label{Second_stage_1}
		}
		\subfigure[$\tau = 5 \times 10^4$.]{
			\includegraphics[width=0.45\textwidth]{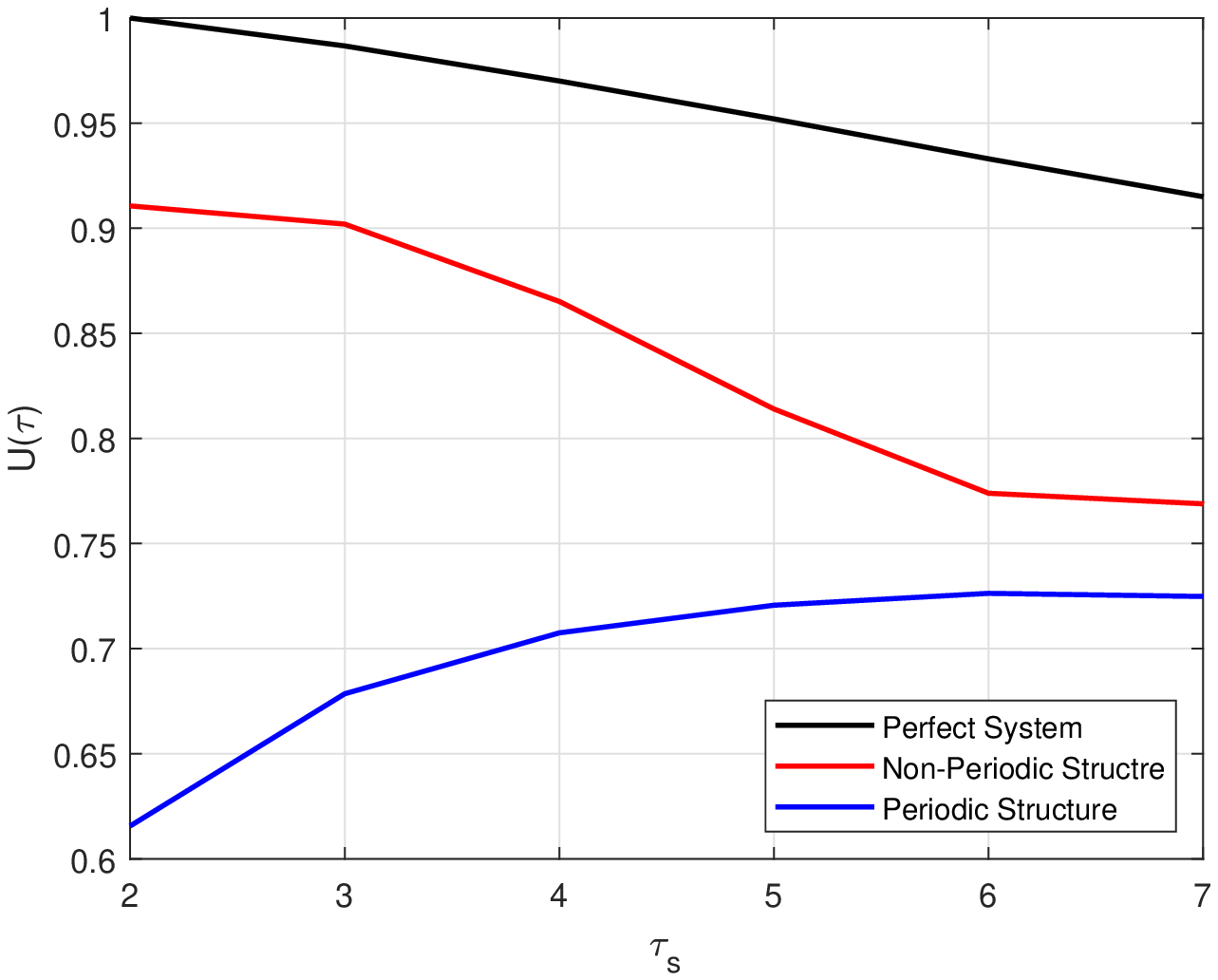}
			\label{Second_stage_2}
		}
		\caption{The NPLA performance $U(\tau)$ for the spectrum sharing strategies using different structures in Stage II with $f_s = 5$ MHz, $N^{(s)} = 10^4$, $\gamma_{st} = -12$ dB, and $\nu = 50$.}
		\label{Second_stage}
	\end{figure*}
	
	\subsection{Performance of the Two-Stage Spectrum Sharing Strategy with Online Function}
	\begin{figure}
		\centering
		\includegraphics[width=0.45\textwidth]{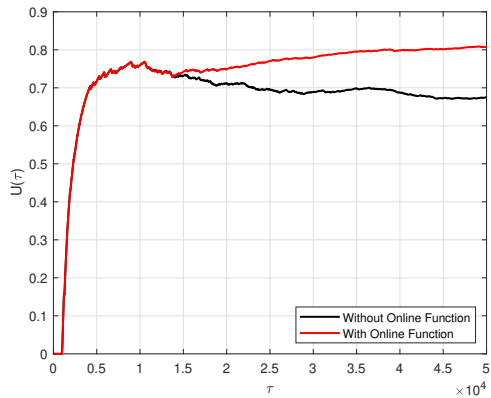}
		\caption{$U(\tau)$ for the spectrum sharing strategies with and without online function in an online scenario with $f_s = 5$ MHz, $N^{(s)} = 10^4$, $\tau_s = 4$, $\gamma_{st} = -12$ dB, and $\nu = 50$.}
		\label{Online Comparison}
	\end{figure}
	
	In Fig. \ref{Online Comparison}, we compare $U(\tau)$ of the proposed spectrum sharing strategies with and without the online function when the PT power level changes after Stage I. In this scenario, $L$ increases to 5 when $\tau = 10^4$, where $\gamma_{ps}^5 = 4 \gamma_{ps}^1$ and other $\gamma_{ps}^l$ remain the same. When $\tau \leqslant 10^4$, the online strategy has the same $U(\tau)$ performance with the regular strategy without the online function. When $\tau \geqslant 10^4$, $U(\tau)$ of the regular strategy decreases slowly. This is because the learning results becomes less accurate with an updated $L$, thus the existing strategy becomes less effective. In contrast, $U(\tau)$ of the online strategy first decreases when $10^4 \leqslant \tau \leqslant 1.4 \times 10^4$ due to the time consumption when updating $\textbf{X}$. When $\tau > 1.4 \times 10^4$, it is shown that $U(\tau)$ can continue to increase based on the updated knowledge of $L$.
	
	In summary, the above simulation results show that the learning method is beneficial to the design of spectrum sharing with multiple PT power levels. With the advent of 5G and industrial Internet of Things in the years to come, the proposed spectrum sharing provides a promising solution to the spectrum scarcity problem.

    \subsection{Discussion and Extension}
    \subsubsection{Rayleigh Fading Channel}
    For the convenience of the exposition of our proposed method, we have not considered fading. However, the extension to fading scenario is straightforward.  Specifically, for Rayleigh fading, $R_n$ in (\ref{R_i}) can be written as \cite{4489760}
    \begin{equation}
        R_{n}[i] = \sqrt{P_{PT,l}} h_n s_n[i]+ u_n[i], \quad \mathcal{H}_l, 
    \end{equation}
    where $h_n \sim {\cal CN}(0,1)$ is the fading channel coefficient. Accordingly, the test statistic $X_n$ still follows a Gaussian distribution as is the non-fading case, but with a different mean $(|h_n|^2 \gamma_{st}^l + 1) \sigma_u^2$ and variance $\frac{1}{N^{(s)}} \left( 2 |h_n|^2 \gamma_{st}^l + 1 \right)\sigma_u^4$. Clearly, when $h_n$ experiences deep fading, the mean $(|h_n|^2 \gamma_{st}^l + 1) \sigma_u^2$ will become very small and our solution is less capable of distinguishing the PT power levels. To solve this issue, motivated by \cite{4489760}, we can introduce time diversity by combining test statistics covering several consecutive time slots, i.e.,
    \begin{equation}
        Y_{n^*} = \sum _ { n_p = 1 } ^ { N_p } w_{n_p} X_{(n^* - 1)N_p + n_p },
    \end{equation}
    where $w_{n_p}$ is the weighting factor associated with the $n$-th sensing slot and $Y_{n^*}$ follows Gaussian distribution with mean $\sum_{n_p = 1}^{N_p} w_{n_p}(|h_{n_p}|^2 \gamma_{st}^l + 1) \sigma_n^2 $ and variance $\frac{1}{N^{(s)}} \sum_{n_p = 1}^{N_p} w_{n_p}^2 \left( 2 |h_{n_p}|^2 \gamma_{st}^l + 1 \right)\sigma_u^4 $.
    
    \subsubsection{Two Stages Switch}
    In the multi-armed bandit problem, dynamic switching between exploration and exploitation epochs \cite{6200864,6362216,8437583} achieves a regret with logarithmic order. However, the design of the two stages in our work are decoupled. That is, Stage I is a one-off design, which solely determines the accuracy in the model parameters estimation. Consequently, simply switching between Stage I and Stage II brings no benefits to the NPLA performance improvement, unless some information in Stage II (e.g., the correctness of the prediction) is fed to Stage I to improve the accuracy in the model parameters estimation. How to incorporate the information in Stage II into Stage I is an interesting problem to be considered in our future work.

    \subsubsection{Practical Implementation}
    Note that energy detection with classical signal processing could potentially suffer from hidden node problem, and can not perform well at very low SNRs. From machine learning perspective, it is possible to combine signal strength with other features to dramatically improve the performance. For example, much more accurate sensing performance at very low SNRs (e.g., -20~dB) can be achieved by integrating cyclic-prefix induced correlation as another feature with signal energy feature\cite{asss}. Therefore, the data-driven method may provide a potential solution to the hidden node problem. In Section II, we made the assumption that the test statistic $X_n$ follows a mixed Gaussian distribution. In a real scenario, such assumption might not hold as the real radio propagation environment is far more complicated. In this case, as suggested in \cite{8466022}, a kernel based process may be applied, which could be considered as our future work.

	\section{Conclusions}
	\label{Conclustion}
	In this paper, we considered the challenging problem of spectrum sharing where the PT transmits with multiple power levels. We endeavored to provide a practical solution with minimal or no prior information on the PT power profiles. We drew on the recent development in machine learning, and proposed a learning based two-stage spectrum sharing strategy. In Stage I, a conditionally conjugate DPGMM was proposed to capture the PT power level variation. Then, a Bayesian inference method was designed for model parameters inference to mathematically establish the PT power profile. Basically, Stage I depicts a big picture of the multi-level radio environments. Based on this knowledge, in Stage II, we designed prediction-transmission structures to enable the ST transmit power level to closely match that of the PT, minimizing the interference to the primary network. To achieve this, we relied on a new metric, NPLA, to characterize the extent of matching between PT and ST transmit power levels. To accommodate a more realistic scenario where the number of PT power levels might dynamically change after the learning, we developed an online function for the strategy. Finally, we verified the effectiveness of the proposed strategy with numerical results.
	
	The significance of our work for future impact are two-fold. On one hand, this data-driven based method may open a door to a new modality of spectrum sensing. On the other hand, our work may find its potential applications from industrial aspects. For example, the recent Ofcom\cite{ccccc} development on TV white space is considering both geolocation database and signal strength-based spectrum sensing.
	
	\appendix[Proof of Lemma \ref{lemma_2}]
	\label{Proof}
	We get the result by induction on $\tau$. In Lemma \ref{lemma_1}, we give the $V(\tau,p_{\tau}^k,k)$ being a convex function of $p_{\tau}^k$. Thus $V(\tau,p_{\tau}^k,k) \leqslant p_{\tau}^k V(\tau,1,k) = 0, \forall \tau \geqslant \mathcal{T}_k$, which means that $V(\tau,p_{\tau}^k,k)$ increases in $p_{\tau}^k$ for given $k$ with $\tau \geqslant \mathcal{T}_k$. Then, we prove the statement for $\tau < \mathcal{T}_k$.
	
	Let the statement holds for $V(\tau,p_{\tau}^k,k), \tau \geqslant \mathcal{T}_k-\upsilon$, then Lemma \ref{lemma_2} can be proved if both terms in \eqref{optimality equation} increase in $p_{\tau}^k$. Thus, we give the first derivative of $E(\tau,p_{\tau}^k,k)$, which is expressed as
	\begin{equation}
	\label{deri_L}
	\dfrac{d E(\tau,p_{\tau}^k,k) }{dp_{\tau}^k} = \sum_{j = 1}^{L} g_{\tau}^S (H_{jk} - \boldsymbol{c}_{k} \boldsymbol{h}_{j}) V(\tau+1,p_{\tau + 1}^k(\mathcal{H}_j),k) + \sum_{j = 1}^{L} \dfrac{g_{\tau}^S H_{jk} \boldsymbol{c}_{k} \boldsymbol{h}_{j}  V'(\tau+1,p_{\tau + 1}^k(\mathcal{H}_j),k)}{p_{\tau}^k g_{\tau}^S H_{jk} + (1 - p_{\tau}^k g_{\tau}) \boldsymbol{c}_{k} \boldsymbol{h}_{j}}.
	\end{equation}
	As the discretized sensing time equals to 1, we have $V'(\tau+1,p_{\tau + 1}^k (\mathcal{H}_j),k) > 0$ for $\tau \geqslant \mathcal{T}_k - \upsilon -1$, thus the second term in \eqref{deri_L} holds positive. In order to prove the first term in \eqref{deri_L} positive, we define	
	\begin{equation}
	\label{Cap_F_1}
	\begin{split}
	\mathcal{C} & = \sum_{j = 1}^{L} (H_{jk} - \boldsymbol{c}_{k} \boldsymbol{h}_{j}) V(\tau+1,p_{\tau +1}^k(\mathcal{H}_j),k) \\
	& = (H_{kk} - \boldsymbol{c}_{k} \boldsymbol{h}_{k}) V(\tau+1,p_{\tau + 1 }^k(\mathcal{H}_k),k) + \sum_{j = 1, j \neq k}^{L} (H_{jk} - \boldsymbol{c}_{k} \boldsymbol{h}_{j}) V(\tau + 1,p_{\tau + 1}^k(\mathcal{H}_j),k).
	\end{split}
	\end{equation}
	In \eqref{Cap_F_1}, we define
	\begin{equation}
	\mathcal{D}(k,j) = H_{jk} - \boldsymbol{c}_{k} \boldsymbol{h}_{j} < H_{jk} - H_{jj} C_{jk}.
	\end{equation}
	In this paper, we consider such a case that the PT transfers to different states with similar probabilities ($C_{jk} \approx 1/(L - 1), j \neq k$). Besides, the probability of the ST correctly clustering the signals is much larger than incorrectly clustering ($H_{jj} \gg L H_{jk}, j \neq k$), which can be verified by the numerical results. Thus $\mathcal{D}(k,j) < 0$. Besides, we define
	\begin{equation}
	\begin{split}
	& \mathcal{F} = p_{\tau + 1}^k(\mathcal{H}_k) - p_{\tau + 1}^k(\mathcal{H}_j) \\
	& = \dfrac{p_{\tau}^k g_{\tau}^S (1 - p_{\tau}^k g_{\tau}^S) \left( H_{kk}\boldsymbol{c}_{k} \boldsymbol{h}_{j} - H_{jk}\boldsymbol{c}_{k} \boldsymbol{h}_{k} \right) }{\left[ p_{\tau}^k g_{\tau}^S H_{kk} + (1 - p_{\tau}^k g_{\tau}^S)\boldsymbol{c}_{k} \boldsymbol{h}_{k} \right] \left[ p_{\tau}^k g_{\tau}^S H_{jk} + (1 - p_{\tau}^k g_{\tau}^S) \boldsymbol{c}_{k} \boldsymbol{h}_{j}\right] } \\
	& = \dfrac{p_{\tau}^k g_{\tau}^S (1 - p_{\tau}^k g_{\tau}^S) \sum_{i = 1}^{L} \left( H_{kk} H_{j i} - H_{j k} H_{k i} \right) C_{i k} }{\left[ p_{\tau}^k g_{\tau}^S H_{kk} + (1 - p_{\tau}^k g_{\tau}^S)\boldsymbol{c}_{k} \boldsymbol{h}_{j} \right] \left[ p_{\tau}^k g_{\tau}^S H_{jk} + (1 - p_{\tau}^k g_{\tau}^S) \boldsymbol{c}_{k} \boldsymbol{h}_{j}\right] }.
	\end{split}
	\end{equation}
	Similarly, we can get $\mathcal{F} > 0$ for $j \neq k$. Thus, $V(\tau+1,p_{\tau+1}^k(\mathcal{H}_k),k) > V(\tau + 1,p_{\tau + 1}^k(\mathcal{H}_j),j)$ for $j \neq k$ and $\tau \geqslant \mathcal{T}_k-\upsilon$.
	
	Therefore, we can reach the following result.
	\begin{equation}
	\begin{split}
	\mathcal{C} & > (H_{kk} - \boldsymbol{c}_{k} \boldsymbol{h}_{k}) V(\tau+1,p_{\tau+1}^k(\mathcal{H}_k),k) + \sum_{j = 1, j \neq k}^{L} (H_{jk} - \boldsymbol{c}_{k} \boldsymbol{h}_{j}) V(\tau + 1,p_{\tau + 1}^k(\mathcal{H}_k),k) \\
	& = \left( \sum_{j=1}^{L} H_{jk} - \sum_{j = 1}^{L} \sum_{i = 1}^{L} H_{ji} C_{ik} \right) V(\tau+1,p_{\tau+1}^k(\mathcal{H}_k),k) \\
	& = \left(1 - \sum_{i = 1}^{L} C_{ik} \right) V(\tau+1,p_{\tau+1}^k(\mathcal{H}_k),k) \\
	& = 0.
	\end{split}
	\end{equation}
	Therefore, the first term in \eqref{deri_L} is proved positive. Thus, the $L(\tau,p_{\tau}^k,k)$ increases in $p_{\tau}^k$ for $\tau = \mathcal{T}_k - \upsilon - 1$. Similarly, the first derivative of $A(\tau,p_{\tau}^k,k)$ can be given by
	\begin{equation}
	\label{deriv_M}
	\begin{split}
	\dfrac{d A(\tau,p_{\tau}^k,k)}{dp_{\tau}^k} = & g_{\tau}^T (1 - \sum_{j=1}^{L} C_{kj}) V(\tau+\tau_s, p_{\tau+\tau_s}^k(\mathcal{A}),k) \\
	& + \dfrac{g_{\tau}^T \sum_{j=1}^{L} C_{kj} V'(\tau+\tau_s, p_{\tau+\tau_s}^k(\mathcal{A}),k)}{p_{\tau}^k g_{\tau}^T + (1 - p_{\tau}^k g_{\tau}^T) \sum_{j = 1}^{L} C_{kj}}  + \left(  D_k + \sum_{j = 1}^{L} C_{kj} Y_j \right)  g_{\tau}^T \tau_s.
	\end{split}
	\end{equation}
	According to \eqref{C_kf} and \eqref{T*}, we can get $\sum_{j = 1}^{L} C_{kj} = 1$ and $D_k + \sum_{j = 1}^{L} C_{kj} Y_j > 0$, which means \eqref{deriv_M} is positive and $A(\tau,p_{\tau}^k,k)$ increases in $p_{\tau}^k$ for $\tau = \mathcal{T}_k - \upsilon - 1$. In summary, both the terms in the \eqref{optimality equation} increase in $p_{\tau}^k$ for $\tau = \mathcal{T}_k - \upsilon - 1$ and we complete the proof.
	
	\bibliography{document}
	\bibliographystyle{IEEEtran}
	
\end{document}